\documentclass{aptpub}

\usepackage{amsfonts}
\usepackage{times}
%\usepackage[pdftex]{graphicx}
%\DeclareGraphicsExtensions{.jpg}
%\usepackage[dvips]{graphicx}
%\DeclareGraphicsExtensions{.eps}
\usepackage{latexsym}
\usepackage{amssymb}
\usepackage{amsmath}
\usepackage{cite}
\usepackage{verbatim}

\newcommand{\bydef}{\triangleq}

% blackboard lowercase
\def\bydef{:=}

\def\bb0{{\mathbb{0}}}

% Bold lowercase
\def\bydef{:=}
\def\ba{{\mathbf{a}}}
\def\bb{{\mathbf{b}}}

\def\bee{{\mathbf{e}}}

\def\b0{{\mathbf{0}}}

% Bold capital letters

\def\bM{{\mathbf{M}}}

\def\bS{{\mathbf{S}}}

\def\b1{{\mathbf{1}}}

% Blackboard capital letters

\def\bbN{{\mathbb{N}}}

\def\bbR{{\mathbb{R}}}

% Caligraphic capital letters

% Sans serif capital letters

% sans serif lowercase
\def\bydef{:=}

\def\sf0{{\mathsf{0}}}

% Added by Takao

\authornames{Rahul Vaze, Srikanth Iyer} % insert the authors here for use in running head
\shorttitle{Percolation on the Information-Theoretically Secure Signal to Interference Ratio Graph} % insert short title here for use in running head

 % for a filled box
% V1.6 some journals use an open box instead that will just fit around a closed one

 % default to closed
\usepackage{graphicx}
\usepackage{amssymb}
\usepackage{amsfonts}
\usepackage{amsmath}
\usepackage{latexsym}
\usepackage{graphicx}
\usepackage{amsbsy}
\usepackage{epsfig}
\usepackage{color}
\usepackage{epstopdf}
\DeclareGraphicsRule{.pstex}{eps}{*}{}
 \DeclareGraphicsRule{.epstex}{pdf}{*}{}
% Put any of your own definitions here.

%\numberwithin{equation}{section}  % If you number theorems, etc. within sections,
                                   % then please uncomment this line to number
                                   % equations with sections too.
%\newtheorem{thm}{Theorem}
%\newtheorem{lemma}{Lemma}
%\newtheorem{rem}{Remark}
%\newtheorem{exm}{Example}
%\newtheorem{prop}{Proposition}
%\newtheorem{defn}{Definition}
%\newtheorem{cor}{Corollary}

%\def\endproof{\hspace*{\fill}\par\endtrivlist\unskip}

\def\SINR{{\mathsf{SINR}}}
\begin{document}

%\documentclass[11pt]{article}
%\usepackage{graphicx,setspace}
%\usepackage{amsmath,amssymb,amsthm}
%\doublespacing
%%\usepackage{setspace}
%%\usepackage{amsthm}
%\usepackage{psfig,psfrag}
%\setlength{\oddsidemargin}{-0.20in}
%\setlength{\evensidemargin}{0.30in}
%\setlength{\topmargin}{-0.05in}
%\setlength{\textheight}{8.6in}
%\setlength{\textwidth}{6.5in}
%\setlength{\parskip}{.1in}
%\setlength{\parindent}{0in}

%\newcommand{\be}{\begin{equation}}
%\newcommand{\ee}{\end{equation}}
%\newcommand{\bea}{\begin{eqnarray}}
%\newcommand{\beas}{\begin{eqnarray*}}
%\newcommand{\no}{\nonumber}
%\newcommand{\eea}{\end{eqnarray}}
%\newcommand{\eeas}{\end{eqnarray*}}
%\newcommand{\fl}{\flushleft}
%\newcommand{\nth}{^{\mathrm{th}}}
%\newcommand{\pr}[1]{\mathsf{P}\left( #1 \right)}
%\newcommand{\selfnote}[1]{\textbf{
%    \\ ***Note to Self:} {\textsf{#1}} \textbf{***} \\}
\newcommand{\qed}{\hfill $\Box$}
\title{Percolation on the Information-Theoretically Secure Signal to Interference Ratio Graph}
%\title{Percolation on the Signal to Interference Ratio Graph with Fading} % insert title - use \\ if it requires more than one line.

\authorone[TIFR, Mumbai, India]{Rahul Vaze} % Affiliation is just the name of your university or institution
\authortwo[IISc, Bangalore, India]{Srikanth Iyer}
\addressone{ School of Technology and Computer Science, Tata Institute of Fundamental Research, Homi Bhabha Road, Mumbai, India, 400005} % Your postal address goes here.
\addressone{ Department of Mathematics, Indian Institute of Science, Bangalore, India, 560012}

{\sl Keywords:} Percolation, Information-Theoretic Security, SINR Graph, Wireless Communication.

%
%\begin{center} {\bf Preprint} \end{center}
%\sloppy
%
\begin{center} {\bf Abstract} \end{center}
{We consider a continuum percolation model consisting of two types of nodes, namely legitimate and eavesdropper
nodes, distributed according to independent Poisson point processes (PPPs) in $\bbR ^2$ of intensities $\lambda$ and $\lambda_E$
respectively. A directed edge from one legitimate node $A$ to another legitimate node $B$ exists provided
the strength of the {\it signal} transmitted from node $A$ that is received at node $B$ is higher than that
received at any eavesdropper node. The strength of the received signal at a node from a legitimate node depends not
only on the distance between these nodes, but also on the location of the other legitimate nodes and an interference
suppression parameter $\gamma$. The graph is said to percolate when there exists an infinite connected
component. We show that for any finite intensity $\lambda_E$ of eavesdropper nodes,
there exists  a critical intensity $\lambda_c < \infty$ such that for all $\lambda > \lambda_c$ the graph
percolates for  sufficiently small values of the interference parameter. Furthermore, for the sub-critical regime, we show that there exists a $\lambda_0$
such that for all $\lambda < \lambda_0 \leq \lambda_c$ a 
suitable graph defined over eavesdropper node connections percolates that precludes percolation in the graphs formed by the legitimate nodes.}\\

%\rule[1mm]{4in}{.4mm} \\
%\vspace{0.1in}
%{\sl AMS 1991 subject classifications}: \\
%\hspace*{0.5in} Primary:   60D05, 60G70;
%\hspace*{0.5in} Secondary:  05C05, 90C27 \\
%
%\end{titlepage}

\section{Introduction and main results}
Random geometric graphs have been used extensively to study various properties of wireless communication networks.
The nodes of the graph represent the communicating entities that are assumed to be distributed randomly in space, and the edges/connections between nodes reflect the realistic wireless communication links.
With the simplest connection model, two nodes are connected (or have an edge between them) provided they are within a specified cutoff distance from each other \cite{BookPenrose,BookRoy}.
Another connection model of interest is the protocol model \cite{Gupta2000}, that incorporates interference emanating from simultaneous transmission by multiple nodes, where two nodes are connected if there is no other node in a specified cut-off area (guard-zone) around the two nodes. Thus a smaller cutoff area results in greater spatial reuse, or more nodes being
able to communicate simultaneously. A non guard zone based connection model for wireless communication is the threshold model \cite{Gupta2000}, where two nodes are connected if the signal-to-noise-ratio (SINR) between them is more than a threshold.
The SINR measures the strength of the signal received from a particular node relative to those received from other nodes, and SINR between nodes $X_i$ and $X_j$ is defined as
\[ \SINR_{ij} = \frac{P \ell(X_i,X_j)}{N + \gamma \sum_{k \ne i,j} P \ell(X_k,X_j)}
\]
where $P$ is the transmitted power from each node, $\ell(X_i,X_j) \leq 1$ is the path-loss or attenuation factor, $N >0$ is the environment noise, and $\gamma \geq 0$ is the interference suppression parameter.
%A directed link can be established only if the SINR is above a certain threshold.

Existence of a path between two nodes in the graph implies the ability of those
nodes to communicate via a multi-hop path. Consequently, percolation in the graph corresponds to long range connectivity among large number of nodes that are part of the giant component. Assuming  that the nodes are distributed according to a Poisson point process in $\bbR^2$ of intensity $\lambda$,  existence of percolation in the graph with the SINR threshold connection model was shown in \cite{Dousse2005, Dousse2006} for all sufficiently small $\gamma > 0$.

Of recent interest is the problem of percolation in wireless
networks in the presence of eavesdroppers \cite{Haenggi2008a,  Pinto2010,Haenggi2011}. In these
models, referred to as the information theoretic secure models, a
legitimate node $i$ is connected (has an edge) to node $j$ provided node $j$ is
closer to node $i$ than its nearest eavesdropper. These are the
links over which secure communication can take place in the
presence of eavesdroppers of arbitrary capability.
Assuming that the legitimate and eavesdropper nodes are distributed
according to independent Poisson point processes in $\bbR^2$ of
intensities $\lambda$, and $\lambda_E$, respectively, existence of
phase transition of percolation in these graphs was established in \cite{Haenggi2008a,  Pinto2010,Haenggi2011}. Using a
branching process argument \cite{Haenggi2008a, Haenggi2011} show that if the ratio $\lambda/\lambda_E < 1$, then almost
surely, no unbounded connected component exits.

The above secrecy graph model \cite{Haenggi2008a,  Pinto2010,Haenggi2011} assumes that the signals transmitted from different
legitimate nodes do not interfere with each other. In reality, that is difficult to
incorporate, since there are large number of legitimate nodes, and all cannot transmit on orthogonal frequency or time slots.
To generalize the secrecy graph model, we extend the notion of the secrecy graph using the SINR or threshold model, where two legitimate nodes are connected if the SINR between them is more than the SINR at any other eavesdropper node.
We derive
two results on the percolation properties for this new SINR based secrecy graph model. The first result is
similar in spirit to the one derived by \cite{Dousse2006}
. It states that for any given intensity $\lambda_E$ of the
eavesdropper nodes, the secrecy graph percolates for sufficiently
large intensity $\lambda$ of legitimate nodes and all sufficiently
small interference suppression parameter $\gamma$. The second
result is that for a given $\lambda_E$ and $\gamma > 0$,
if the density of legitimate nodes is below a threshold,
then the graph does not percolate. To prove the second result, we use a novel technique of defining a   
suitable graph over eavesdropper node connections, where percolation in the eavesdropper nodes' graph precludes  percolation in the graphs formed by the legitimate nodes.
 To complete the result we show that for any given $\lambda_E$ and $\gamma > 0$,
if the density of legitimate nodes is below a threshold, then the defined eavesdropper nodes' graph percolates. \\

Before we proceed to describe the model in detail and state the
main results, we need some notation.

{\bf Notation:} The cardinality of set $S$ is denoted by $|S|$.
The complement of set $S$ is denoted by $\bar S$. A ball of radius
$r$ centered at $x$ is denoted by $B(x,r)$. The boundary of a set
$G \subset \bbR^2$ is denoted by $\delta G$. For a set $A \subset
\bbR^2$, $a+A$ denotes a translation of $A$ with $a \in \bbR^2$ as
the center. The Lebesgue measure of a set $ A \subset \bbR^2$ is
denoted as $\nu(A)$.

\subsection{System Model} We now describe the secure SINR graph (SSG), which
generalizes the secrecy graph considered in \cite{Haenggi2008a,  Pinto2010,Haenggi2011},
by allowing all legitimate nodes to transmit at the same
time/frequency and interfere with each other's communication. Let
$\Phi$ be the set of legitimate nodes, and $\Phi_E$ be the set of
eavesdropper nodes. We assume that the points in $\Phi$ and
$\Phi_E$ are distributed according to independent PPPs with intensities $\lambda$ and $\lambda_E$, respectively.
Let $x_i,x_j\in\Phi$, and $e\in \Phi_E$. Without loss of
generality, we assume an average power constraint of unity $(P=1)$ at each
node in $\Phi$, and noise variance $N=1$. Let $0<\gamma \le 1 $ be the processing gain of
the system (interference suppression parameter), which depends on
the orthogonality between codes used by different legitimate nodes
during simultaneous transmissions. Then the SINR between $x_i$ and
$x_j$ is
$$\SINR_{ij} \bydef \frac{\ell(x_i, x_j)}{\gamma \sum_{k\in \Phi
k\ne i}\ell(x_k, x_j) + 1},$$
and between $x_i$ and $e$ is
$$\SINR_{ie} \bydef \frac{\ell(x_i, e)}{\sum_{k\in \Phi k\ne
i}\ell(x_k, x_j) + 1}.$$
Note that the parameter $\gamma$ is absent in the second SINR
formula. This is due to the fact that the code used by the
legitimate nodes is not known to the eavesdroppers, hence no
processing gain can be obtained at any of the eavesdroppers. Then
the maximum rate of reliable communication between $x_i$ and $x_j$
such that an eavesdropper $e$ gets no knowledge is
\cite{Wyner1975}
$$R^{\SINR}_{ij}(e) \bydef \left[ \log_2\left(1+ \SINR_{ij}\right) - \log_2 \left(1+ \SINR_{ie}\right)\right]^+,$$
and the maximum rate of communication between $x_i$ and $x_j$ that is secured from  all the eavesdropper nodes of $\Phi_E$,
$$R^{\SINR}_{ij} \bydef \min_{e\in \Phi_E} R_{ij}(e).$$

%Therefore $R_{ij}$ characterizes the rate of communication between $x_i$ and $x_j$ that is secure from all eavesdropper $e\in \Phi$.

\begin{defn} SINR  Secrecy graph (SSG) is a directed graph $SSG(\theta) \bydef \{\Phi, {\cal E}\}$, with vertex set $\Phi$, and edge
set ${\cal E} \bydef \{(x_i, x_j): R^{\SINR}_{ij}> \theta\}$, where $\theta$ is the minimum rate of secure communication required between any two nodes of $\Phi$.
\end{defn}
We will assume $\theta=0$ for the rest of the paper, and represent
$SSG(0)$ as $SSG$. The results can be generalized easily for
$\theta >0$. With $\theta=0$, $SSG \bydef \{\Phi, {\cal E}\}$,
with  edge set ${\cal E} \bydef \{(x_i, x_j): \SINR_{ij}>
\SINR_{ie}, \ \forall \ e\in \Phi_E \}$.

\begin{defn} We define that a node $x_i$ can {\it connect}  to  $x_j$ (or there is a link/connection between them) if  $(x_i, x_j) \in SSG$.\end{defn}

\begin{defn} We define that there is a {\it path} from node $x_i \in \Phi$ to $x_j \in \Phi$ if there is a connected path from $x_i$ to $x_j$ in the $SSG$.  A path between $x_i$ and $x_j$ on $SSG$ is represented as $x_i \rightarrow x_j$.
\end{defn}

\begin{defn}
The connected component of any node $x_j \in \Phi$, is defined as $C_{x_j} \bydef   \{x_k \in \Phi, x_j\rightarrow x_k\}$, with cardinality  $|C_{x_j}|$.
\end{defn}

\begin{rem} Note that because of stationarity of the PPP, the distribution of $|C_{x_j}|$ does not depend on $j$,
and hence without loss of generality from here on we consider node
$x_1$ for the purposes of defining connected components. Further
we assume without loss of generality that $x_1$ is at the origin.
\end{rem}

%The secrecy graph  when $\Phi$ and $\Phi_E$ are distributed as PPPs is referred to as the Poisson secrecy graph (PSG).
%Moreover, we consider $\rho = 0$ in the rest of the paper, and drop the index $\gamma$ from the definition of $SSG$.
%Therefore there exists an edge between $x_i$ and $x_j$ in $SSG$ if it can support a non-zero secrecy rate, $R_{ij} > 0$. Generalization to $\rho>0$ is straightforward.

In this paper we are interested in studying the percolation
properties of the $SSG$. In particular, we are interested in
finding the minimum value $\lambda_c$ of $\lambda$ for which the
probability of having an unbounded connected component in $SSG$ is
greater than zero, as a function of  $\lambda_E$, i.e. $\lambda_c
\bydef \inf \{{\lambda}:P(|{\cal C}_{x_1}| = \infty)>0\}$. The
event $\{|{\cal C}_{x_1}| = \infty\}$ is also referred to as {\it
percolation} on $SSG$, and we say that percolation happens if
$P(\{|{\cal C}_{x_1}| = \infty\})>0$, and does not happen if
$P(\{|{\cal C}_{x_1}| = \infty\})=0$.

\begin{rem} Assuming that all legitimate nodes can transmit in orthogonal
time/frequency slots, secrecy graph $SG$ was introduced in
\cite{Haenggi2008a}, where two legitimate nodes are connected if
the received signal power between them is more than the received
signal power  at the nearest eavesdropper, i.e. $SG \bydef \{\Phi,
{\cal E}\}$, with vertex set $\Phi$, and edge set ${\cal E} \bydef
\{(x_i, x_j): \ell(x_i,x_j)> \ell(x_i, e), \ \forall \ e \in
\Phi_E\}$. Percolation properties of $SG$ were studied in
\cite{Haenggi2011, Pinto2010}, where in \cite{Haenggi2011} it was
shown that if $\lambda < \lambda_E$, then there is no percolation,
while \cite{Pinto2010} showed the existence of $\lambda$ for any
fixed $\lambda_E$ for which the $SG$ percolates. The graph structure
of $SSG$ is more complicated compared to $SG$ because of the presence
of interference power terms corresponding to simultaneously
transmitting legitimate nodes, and hence the results of
\cite{Haenggi2011, Pinto2010} do not apply for $SSG$. For example,
consider the case of $\gamma=0$, where it is possible that two
legitimate nodes $x_i$ and $x_j$, with $d_{ij} > \min_{e\in
\Phi_E} d_{ie}$ can connect to each other in the $SSG$, however,
$x_i$ and $x_j$ cannot connect to each other in the $SG$ since
$d_{ij} > \min_{e\in \Phi_E}d_{ie}$. Similarly, if $x_j$ is closer
to $x_i$ than any other eavesdropper node, then $x_i$ is connected
to $x_j$ in $SG$, however, that may not be the case in $SSG$.
\end{rem}

\begin{rem} Without the presence of eavesdropper nodes, percolation
on the SINR graph, where the vertex set is $\Phi$, and edge set
${\cal E} \bydef \{(x_i, x_j): {\SINR}_{ij}\ge \beta, x_i,x_j\in
\Phi\}$ for some fixed threshold $\beta$, has been studied in
\cite{Dousse2005, Dousse2006, VazeSINR2011}. The results of
\cite{Dousse2005, Dousse2006, VazeSINR2011}, however, do not apply
for the $SSG$, since for $SSG$, $\beta=\SINR_{ie}$ is a random
variable that depends on both $\Phi$ and $\Phi_E$.
\end{rem}

\begin{rem} Note that we have defined $SSG$ to be  a directed graph,
and the connected component of $x_1$ is its out-component, i.e. the
set of nodes with which $x_1$ can communicate secretly.
Since $x_i \rightarrow x_j, \ x_i,x_j\in \Phi$, does not imply
$x_j \rightarrow x_i \ x_i,x_j\in \Phi$,  one can similarly define
in-component $C_{x_j}^{in} \bydef   \{x_k \in \Phi, x_k\rightarrow
x_j\}$,  bi-directional component $C_{x_j}^{bd} \bydef   \{x_k \in
\Phi, x_j\rightarrow x_k\ \text{and} \ x_k\rightarrow x_j\}$, and
either one-directional component $C_{x_j}^{ed} \bydef   \{x_k \in
\Phi, x_j\rightarrow x_k \ \text{or} \ x_k\rightarrow x_j \}$.
Percolation on $C_{x_j}^{in}$, $C_{x_j}^{bd}$ and $C_{x_j}^{ed}$
is in principle similar to the percolation on out-component, but
are not considered in this paper.
\end{rem}

\subsection{Main Results}

\begin{thm}\label{thm:sup} For the signal attenuation function $\ell(x)$,
such that $\int x\ell(x) dx < \infty$, for any $\lambda_E$,
there exists $\lambda' < \infty$ and a function
$\gamma'(\lambda,\lambda_E) >0$, such that  $P(|C_{x_1}| = \infty)
> 0$ in the $SSG$ for $\lambda > \lambda'$, and $\gamma <
\gamma'(\lambda,\lambda_E)$.
\end{thm}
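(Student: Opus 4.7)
The plan is to adapt the renormalization argument of \cite{Dousse2005, Dousse2006} for SINR percolation and supplement it with a new condition that controls the interference seen at every nearby eavesdropper. The intuition is that a dense legitimate PPP acts as a jammer against eavesdroppers (who do not enjoy the processing gain $\gamma$), while small $\gamma$ keeps the interference at legitimate receivers manageable; thus secrecy is in fact \emph{easier} when $\lambda$ is large and $\gamma$ is small, provided the parameters are tuned against each other.

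Partition $\bbR^2$ into closed squares $\{S_z\}_{z \in \bbZ^2}$ of side $d$ and introduce nearest-neighbor bond percolation on $\bbZ^2$. Fix constants $d, R, D^\star, M_1, M_2 > 0$ to be chosen at the end. Declare the bond $\{z,z'\}$ between adjacent squares \emph{open} if the following three events all hold: \textbf{(E1)} each of $S_z, S_{z'}$ contains at least one point of $\Phi$, so that representatives $X_z, X_{z'}$ may be selected; \textbf{(E2)} the far-field interference $\sum_{X \in \Phi,\, |X-y|>R} \ell(X,y) \le M_1$ for every $y \in S_z \cup S_{z'}$; \textbf{(E3)} for every $e \in \Phi_E$ within distance $D^\star$ of $S_z \cup S_{z'}$, the local interference $\sum_{X \in \Phi \cap B(e,R)} \ell(X,e) \ge M_2$.

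When the bond is open, verify that $(X_z, X_{z'})$ is an edge of the SSG by splitting eavesdroppers into two groups. For $e$ at distance greater than $D^\star$ from the bond, monotonicity of $\ell$ bounds $\ell(X_z, e)$ by a small constant while $\ell(X_z, X_{z'}) \ge \ell(\sqrt{5}\, d)$; (E2) together with the bounded contribution of finitely many neighbor squares gives a uniform bound on $I_{X_{z'}}$, so for $\gamma$ sufficiently small the edge condition $\ell(X_z, X_{z'})(I_e + 1) > \ell(X_z, e)(\gamma I_{X_{z'}} + 1)$ holds. For $e$ within distance $D^\star$, (E3) provides $I_e \ge M_2$, and choosing $M_2$ large compared to $\ell(0)/\ell(\sqrt{5}\, d)$ enforces the same inequality even when $\ell(X_z, e)$ is close to $\ell(0)$.

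To close the argument, drive $P(\text{bond closed}) \to 0$ as $\lambda \to \infty$ and $\gamma \to 0$. Event (E1) fails with probability at most $2 e^{-\lambda d^2}$; (E2) is controlled via Campbell's formula and Chebyshev's inequality using the integrability hypothesis $\int r \ell(r)\, dr < \infty$; and (E3) is handled through a Campbell--Mecke computation in which each eavesdropper in the relevant region is ``bad'' with probability decaying exponentially in $\lambda$ by a Chernoff estimate on the Poisson count $|\Phi \cap B(e,R)|$, while the expected number of eavesdroppers in the region is only $O(\lambda_E (D^\star)^2)$. Since the event defining a given bond depends only on $\Phi$ and $\Phi_E$ restricted to a ball of radius $O(R + D^\star)$ around the bond, bonds at lattice distance beyond a fixed constant are independent; Liggett--Schonmann--Stacey then yields stochastic domination by a Bernoulli bond percolation with $p > p_c^{\mathrm{bond}}(\bbZ^2) = 1/2$ once $\lambda$ is large and $\gamma$ small enough. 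An infinite open cluster in $\bbZ^2$ translates into an infinite connected component in the SSG, proving the theorem. The main obstacle is (E3): enforcing a uniform interference lower bound at \emph{every} eavesdropper in a macroscopic region, against a Poisson-distributed number of them, is the genuinely new step, to be resolved by playing the exponential decay of the per-eavesdropper failure probability in $\lambda$ off against the union bound over the eavesdropper count.
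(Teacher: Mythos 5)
Your overall architecture (renormalize to a lattice, LSS domination, and --- interestingly --- use the \emph{interference at the eavesdroppers} as a jamming resource so that eavesdroppers need not be excluded from the neighborhood of an open bond) is a genuinely different route from the paper, which instead passes to the pessimistic subgraph $SSG^e$ with zero interference at eavesdroppers and demands an eavesdropper-free zone $Y(\ba)$ around each open edge. However, as written your proof has a genuine gap in the control of the interference at the \emph{legitimate} receivers, which is exactly the quantity multiplying $\gamma$ in the edge condition. Your event (E2) bounds only the far-field contribution $\sum_{|X-y|>R}\ell(X,y)$ by a \emph{fixed} constant $M_1$, and your verification step asserts that this ``together with the bounded contribution of finitely many neighbor squares gives a uniform bound on $I_{X_{z'}}$.'' That assertion is false: the number of points of $\Phi$ within distance $R$ of $X_{z'}$ (indeed, inside $S_z\cup S_{z'}$ itself) is Poisson with mean of order $\lambda$, so on the event (E1)--(E3) the near-field interference is unbounded, and for any fixed $\gamma>0$ the quantity $\gamma I_{X_{z'}}+1$ can be arbitrarily large, killing the claimed SSG edge. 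Worse, the same scaling defeats (E2) itself: the mean far-field interference is $\lambda\int_{|x|>R}\ell(|x|)\,dx$, which grows linearly in $\lambda$, so for fixed $M_1$ the probability of (E2) tends to $0$, not $1$, as $\lambda\to\infty$; Campbell--Chebyshev gives you the opposite of what you need. There is a structural tension in your scheme: (E3) deliberately forces a high density of legitimate transmitters near the bond to jam eavesdroppers, and those same transmitters are interferers for $X_{z'}$.

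The repair is precisely the device used in the paper: include in the definition of an open bond the event that the \emph{total} (near plus far) interference at every legitimate node of $S_z\cup S_{z'}$ is at most a threshold that scales with the last-chosen parameter, namely $1/\gamma$ (or, equivalently, a threshold growing with $\lambda$, with $\gamma$ then taken of order $1/\lambda$); its failure probability is controlled by the shot-noise exponential-moment bound of Dousse et al.\ (their Proposition~2, which is where the hypothesis $\int x\ell(x)\,dx<\infty$ enters), and the order of choices is: fix $d,R,D^\star,M_2$, take $\lambda$ large, and only then take $\gamma<\gamma'(\lambda,\lambda_E)$ --- which the theorem statement permits since $\gamma'$ may depend on $\lambda$. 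With that event added, your (E2) becomes superfluous, your near/far eavesdropper dichotomy goes through (the near case needs $M_2$ large compared to $\ell(0)/\ell(\sqrt{5}d)$ and the far case needs $\ell(D^\star)<\ell(\sqrt{5}d)/2$, both achievable with constants), and your Campbell--Mecke/Chernoff treatment of (E3) is sound. Without it, the key lemma ``open bond implies SSG edge'' does not hold, so the proposal as stated does not prove the theorem.
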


We show that for small enough $\gamma$, there exists a large enough $\lambda$ for which the $SSG$ percolates with positive probability for any value of $\lambda_E$. This result is similar in spirit to \cite{Dousse2005, Dousse2006}, where percolation is shown to happen in the SINR graph, where two nodes are connected if the SINR between them is more than a fixed threshold $\beta$, (without the secrecy constraint due to eavesdroppers) for small enough
$\gamma$ with finite and unbounded support signal attenuation function, respectively.
The major difference between the $SSG$ and SINR graph, is that with $SSG$ the threshold for connection between two nodes (maximum of SINRs received at all eavesdroppers) is a random variable that depends on both the legitimate and eavesdropper density, in contrast the threshold  in the SINR graph is a fixed constant.

To prove the result, we consider percolation on another graph $SSG$$^e$ that is a subset of $SSG$. $SSG$$^e$ is obtained from $SSG$ by replacing the SINR at each eavesdropper node in $SSG$ definition by $\SINR_{ie} = \ell(d_{ie})$, i.e. the SINR at each eavesdropper node is replaced by just the signal power received at the eavesdropper node and making the interference power terms equal to zero.
Considering this subset $SSG$$^e$ simplifies the percolation analysis significantly.
Then to show the percolation on the subset $SSG^e$, we map the continuum percolation of $SSG$ to an appropriate bond percolation on the square lattice,
similar to \cite{Dousse2006}.

For the converse, we have the following Theorem on the lower bound for the critical density $\lambda_c$.

\begin{thm} \label{thm:inf} For every $\lambda_E > 0$ and $\gamma
\in (0,1)$, there exists a $\lambda_c =
\lambda_c(\lambda_E,\gamma) > 0$ such that for all $\lambda <
\lambda_c$, $P(|C_{x_1}| = \infty) = 0$ in the $SSG$.
\end{thm}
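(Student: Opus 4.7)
The strategy is dual to that of Theorem~\ref{thm:sup}: rather than produce an unbounded $SSG$ cluster by lattice open-bond percolation, I would use the eavesdropper process to build an almost-sure infinite \emph{barrier} around the origin that confines $C_{x_1}$ to a bounded region. The barrier is encoded in an auxiliary eavesdropper-based graph $G_E$, whose definition must simultaneously fulfil two requirements: (i) any $SSG$ edge whose segment meets a connected component of $G_E$ is forbidden, and (ii) $G_E$ percolates whenever $\lambda$ is sufficiently small.

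For (i) I would prove the following blocking lemma. Fix parameters $R,M>0$ to be chosen later. Call an eavesdropper $e$ \emph{effective} if the local interference $I_e=\sum_{x_k\in\Phi}\ell(\|x_k-e\|)$ is at most $M$ and no legitimate node lies in $B(e,R)$; connect $e,e'$ in $G_E$ whenever $B(e,R)\cap B(e',R)\ne\emptyset$. The target inequality
\[
\ell(d_{ij})(I_e+1)\;\le\;\ell(d_{ie})(\gamma I_j+1),
\]
which is exactly the negation of $\SINR_{ij}>\SINR_{ie}$, must be established whenever a candidate $SSG$ segment $x_ix_j$ enters the blocking disc of some effective $e$. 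The key geometric ingredient is that a chord of $B(e,R)$ forces $\min(d_{ie},d_{je})$ to be small relative to $d_{ij}$, making $\ell(d_{ie})/\ell(d_{ij})$ large; the key algebraic ingredient is the asymmetry of the two SINR expressions, namely the factor $\gamma<1$ attached to the legitimate interference, which together with the $M$-bound on $I_e$ provides the slack needed to swallow random fluctuations of $I_j$.

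For (ii) I would coarse-grain the eavesdropper model to site percolation on $\bbZ^2$. Tile $\bbR^2$ by squares of side $a$ and call a square \emph{good} if it contains an effective eavesdropper. Independence of disjoint squares and the void probabilities of the two PPPs give
\[
\bbP(\text{square is good})\;\ge\;(1-e^{-\lambda_E a^2})\,e^{-\lambda \pi R^2}\,\bbP(I_e\le M).
\]
Under the hypothesis $\int x\ell(x)\,dx<\infty$ one has $\bbE[I_e]=\lambda\int\ell(\|y\|)\,dy\to 0$ as $\lambda\to 0$, so $\bbP(I_e\le M)\to 1$, and the whole lower bound can be made to exceed the site-percolation threshold $p_c^{\mathrm{site}}(\bbZ^2)$ by first taking $a$ large and then $\lambda$ small. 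The standard planar circuit argument then produces an almost-sure circuit of good squares around the origin, and the blocking lemma forces $|C_{x_1}|<\infty$.

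The main obstacle will be the blocking lemma, because the receiver interference $I_j$ is not controlled by the effectiveness of $e$ alone and a crossing pair $(x_i,x_j)$ may have $x_j$ arbitrarily far from $e$. My plan is to split into two regimes: for large $d_{ij}$ the signal $\ell(d_{ij})$ is already so attenuated that the inequality is automatic for any $I_j\ge 0$, while for moderate $d_{ij}$ the receiver $x_j$ is forced to lie near a neighbouring good square and inherits an interference bound from its cleared annulus. Harmonising these two regimes with the choice of $a,R,M$ above is the delicate calculation.
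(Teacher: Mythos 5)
Your overall architecture is the same as the paper's: build an eavesdropper-based barrier process, coarse-grain it to a lattice percolation model, and argue via circuits that a percolating barrier precludes left--right and top--bottom crossings by $SSG$, hence percolation. (The paper does this with a bond process on a side-$M$ lattice: an edge is open when the two big adjacent squares are void of legitimate nodes, the two small central squares $T_1,T_2$ each contain an eavesdropper, and those eavesdroppers have interference at most $c$; it also first reduces to $\gamma=0$ using $SSG(\gamma)\subseteq SSG(0)$, which disposes of $\gamma$ once and for all.) The genuine gap in your proposal is the blocking lemma, which is false as formulated. If a segment $x_ix_j$ merely clips the disc $B(e,R)$ of an effective eavesdropper near its far end --- transmitter $x_i$ far away, receiver $x_j$ just outside $B(e,R)$, which effectiveness does not forbid --- then $d_{ie}$ is comparable to $d_{ij}$ (it can even reach $d_{ij}+R$), so $\ell(d_{ie})$ offers no gain over $\ell(d_{ij})$, while in the worst case $I_j=0$ the right-hand side of your inequality is only $\ell(d_{ie})$ and the left-hand side is at least $\ell(d_{ij})(1+\ell(d_{je}))$; the inequality fails and the $SSG$ edge survives despite meeting the disc. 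Your proposed repair rests on a misconception: for large $d_{ij}$ the inequality is \emph{not} automatic, because $SSG$ has no fixed threshold --- the competing term $\ell(d_{ie})$ attenuates exactly as fast; and controlling $I_j$ from above is the wrong direction, since small $I_j$ is the hard case, while $\gamma<1$ also works \emph{against} you (it suppresses interference at the legitimate receiver, making the legitimate link harder to block), not for you. The paper escapes this trap structurally: the region cleared of legitimate nodes (the $2M\times M$ rectangle) is much larger than the region where the blocking eavesdroppers sit ($T_1\cup T_2$, one on each side of the crossed lattice edge), so every crossing $SSG$ edge is forced to be long ($d_{ij}>M$) while a low-interference eavesdropper lies within a bounded distance of the crossing point on the transmitter's side; that separation of scales is what makes ``choose $M$ large'' deliver $\ell(d_{ij})<\ell(d_{ie})/(1+c)$. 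In your construction the cleared region and the blocking region coincide, so no such gap between $d_{ie}$ and $d_{ij}$ is available, and no choice of $R$ and of your interference bound repairs it.

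A secondary gap: the claimed independence of disjoint ``good'' squares is not available, because effectiveness involves the global shot noise $I_e$ and the vacancy of $B(e,R)$, both driven by the legitimate process, so good-square events are dependent at all distances when $\ell$ has unbounded support; your product lower bound on the good-square probability and the jump to supercritical site percolation therefore do not stand as written. The paper handles exactly this with the shot-noise estimate imported from Dousse et al. (its Lemmas \ref{lem:b} and \ref{lem:d}), which bounds the joint failure probability along a circuit of $n$ edges exponentially in $n$ without any independence, combined with the spacing trick of Lemma \ref{lem:a} and a Peierls count; you would need the same machinery (or a domination argument), and additionally $R$ of order the tile side $a$ so that discs in adjacent good squares overlap and the circuit of discs is genuinely impenetrable. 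Those items are fixable details; the blocking lemma is the conceptual flaw, and fixing it essentially forces you back to the paper's ``large vacant region, eavesdroppers on both sides of the crossing'' construction.
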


We show that for any $\gamma>0$, there exists small enough $\lambda$ for which the $SSG$ does not percolate for any  value of $\lambda_E$.
In prior work, on secrecy graph with no interference among simultaneously transmitting legitimate nodes, a stronger result was proved that if $\lambda<\lambda_E$ then the secrecy graph does not percolate  \cite{Haenggi2011} using branching process argument on the out-degree distribution. We are only able to show an existential result for the $SSG$, since finding the out-degree distribution of any node in the $SSG$ is quite challenging and $SSG$ is not amenable to analysis similar to \cite{Haenggi2011}.

The proof idea is to define an appropriate eavesdropper node graph such that if an edge exists between two eavesdropper nodes then there exists no edge of $SSG$ that crosses that edge in $\bbR^2$. Note that for $SSG$ to percolate, there should be left to right crossing and top to bottom crossing of any square box of large size in $\bbR^2$ by connected edges of $SSG$.
However, if the eavesdropper node graph percolates, then there cannot be  left to right crossing and top to bottom crossing of any square box of large size in $\bbR^2$ by connected edges of $SSG$, and consequently $SSG$ cannot percolate if the eavesdropper node graph percolates.
Then we derive conditions for percolation on the defined eavesdropper node graph to find conditions when the $SSG$ does not percolate.

\section{Proof of Theorem~\ref{thm:sup}}
\label{sec:plsupcfading} In this section we  are interested in the
super-critical regime and want to find an upper bound  on
$\lambda$ such that $P(|C_{x_1}|=\infty)>0$ for a fixed
$\lambda_E$.  Towards that end, we will tie up the percolation on
$SSG$ to a bond percolation on square lattice, and show that bond
percolation on the square lattice implies percolation in the
$SSG$.

For the super-critical regime, we consider the enhanced graph $SSG^e$, where
$SSG^e \bydef \{\Phi, {\cal E}^e\}$, with  edge
set ${\cal E}^e \bydef \{(x_i, x_j): \SINR_{ij}> \ell(d_{ie}), \ \forall \ e\in \Phi_E \}$. For defining $SSG^e$, we have considered the interference power at the eavesdropper nodes to be zero.
Clearly, $SSG^e \subseteq SSG$, and hence if $SSG^e$ percolates, then so does $SSG$.

We tile $\bbR^2$ into a square lattice $\bS$ with side $s$. Let
$\bS' = \bS+(\frac{s}{2}, \frac{s}{2})$ be the dual lattice of
$\bS$ obtained by translating each edge of $\bS$ by $(\frac{s}{2},
\frac{s}{2})$. For any edge $\ba$ of $\bS$, let $S_1(\ba)$ and
$S_2(\ba)$ be the two adjacent squares to $\ba$. See Fig.
\ref{fig:supcrit} for a pictorial description. Let
$\{a_i\}_{i=1}^{4}$ denote the four  vertices of the rectangle
$S_1(\ba) \cup S_2(\ba)$. Let $Y(\ba)$ be the smallest square
containing $\cup_{i=1}^4 B(a_i, t)$, where $t$ is such that
$\ell(t)< \frac{\ell(\sqrt{5}s)}{2}$.

\begin{defn}\label{def:open} Any edge $\ba$ of $\bS$ is defined to be open if
\begin{enumerate}
\item there is at least one node of $\Phi$ in both the adjacent squares $S_1(\ba)$ and $S_2(\ba)$,
\item there are no eavesdropper nodes in $Y(\ba)$,
\item and for any legitimate node $x_i \in \Phi \cap \left(S_1(\ba)\cup S_2(\ba)\right)$, the interference received at any legitimate node $x_j \in \Phi \cap \left(S_1(\ba)\cup S_2(\ba)\right)$, $I_j^i \bydef \sum_{k\in \Phi, k\ne i }\ell(x_k,x_j) \le  \frac{1}{\gamma}$.\end{enumerate}
\end{defn}
An open edge is pictorially described in Fig. \ref{fig:supcrit} by edge $a$, where the  black dots represent a legitimate node while a cross
represents an eavesdropper node.

The next Lemma allows us to tie up the continuum percolation on $SSG$ to the bond percolation on the square lattice,
where we show that if an edge $\ba$ is open, then all legitimate nodes lying in $S_1(\ba) \cup S_2(\ba)$ can connect
to each other.
\begin{lemma}\label{lem:open} If an edge $\ba$ of $\bS$ is open, then any node $x_i \in \Phi \cap \left(S_1(\ba)\cup S_2(\ba)\right) $ can connect to any node $x_j \in \Phi \cap \left(S_1(\ba)\cup S_2(\ba)\right)$ in $SSG^e$.
\end{lemma}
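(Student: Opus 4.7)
The plan is to verify directly that whenever the edge $\ba$ is open, the $SSG^e$-connection inequality $\SINR_{ij} > \ell(d_{ie})$ holds for every pair $x_i, x_j \in \Phi \cap (S_1(\ba) \cup S_2(\ba))$ and every eavesdropper $e \in \Phi_E$. I will bound the two sides separately, which decouples the interference bookkeeping (openness condition (3)) from the eavesdropper exclusion (openness condition (2)).

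The lower bound on $\SINR_{ij}$ is the easy half. Since $S_1(\ba) \cup S_2(\ba)$ is a $2s \times s$ rectangle, any two points in it lie within the diagonal distance $\sqrt{5}\,s$, so by monotonicity of $\ell$ the numerator satisfies $\ell(x_i, x_j) \ge \ell(\sqrt{5}\,s)$. Openness condition (3) bounds the interference sum by $1/\gamma$, hence the denominator $\gamma I_j^i + 1$ by $2$, and together these yield $\SINR_{ij} \ge \ell(\sqrt{5}\,s)/2$.

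For the upper bound on $\ell(d_{ie})$ I need to show that every eavesdropper $e$ lies at distance at least $t$ from every point of $S_1(\ba) \cup S_2(\ba)$; once this is in hand, the defining property $\ell(t) < \ell(\sqrt{5}\,s)/2$ of $t$ gives $\ell(d_{ie}) < \SINR_{ij}$ and finishes the proof. Condition (2) tells us only that $e \notin Y(\ba)$, so the task reduces to the geometric fact that $Y(\ba)$ contains the full $t$-neighborhood of the rectangle, not merely $t$-balls around its four vertices. This is precisely why the paper defines $Y(\ba)$ as the smallest enclosing \emph{square} rather than the union of the four corner balls: placing the rectangle as $[0,2s] \times [0,s]$, the bounding box of the four balls $B(a_i,t)$ has dimensions $(2s+2t) \times (s+2t)$, so the minimal enclosing square has side $2s+2t$. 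Its horizontal range is then forced to be exactly $[-t, 2s+t]$, while its vertical range (in any admissible placement) must cover $[-t, s+t]$; in either coordinate direction the square therefore extends by at least $t$ beyond the rectangle, so $e \notin Y(\ba)$ indeed implies $d(e,x) > t$ for every $x$ in the rectangle.

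Combining the two bounds gives $\SINR_{ij} \ge \ell(\sqrt{5}\,s)/2 > \ell(d_{ie})$ for every $e \in \Phi_E$, which is precisely the defining inequality for $(x_i,x_j) \in {\cal E}^e$. I expect the small geometric verification in the previous paragraph to be the only non-routine step; everything else is a direct substitution of the openness conditions into the definitions of $\SINR_{ij}$ and $SSG^e$.
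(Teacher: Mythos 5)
Your proof is correct and takes essentially the same route as the paper: lower-bound $\SINR_{ij}\ge \ell(\sqrt{5}s)/2$ via the diagonal $\sqrt{5}s$ and the interference condition $\gamma I_j^i\le 1$, then upper-bound $\ell(d_{ie})\le \ell(t)<\ell(\sqrt{5}s)/2$ because no eavesdropper lies in $Y(\ba)$. The only difference is that you explicitly verify the geometric fact that any admissible square $Y(\ba)$ contains the full $t$-neighborhood of $S_1(\ba)\cup S_2(\ba)$ (so $e\notin Y(\ba)$ forces $d_{ie}>t$), a step the paper asserts without proof; your verification is sound.
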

\begin{proof} For any $x_i, x_j \in \Phi\cap\left(S_1(\ba)\cup S_2(\ba)\right)$, $\SINR_{ij} \ge \frac{\ell(\sqrt{5}s)}{2}$, since $I_j^i \bydef \sum_{k\in \Phi, k\ne i }\ell(x_k,x_j) \le  \frac{1}{\gamma}$ for
each $x_i, x_j \in \Phi \cap \left(S_1(\ba)\cup S_2(\ba)\right)$. Moreover, since there are no eavesdropper nodes in $Y(\ba)$, the minimum distance between any eavesdropper node from any legitimate node in $\Phi\cap\left(S_1(\ba)\cup S_2(\ba)\right)$ is at least $t$. Since $t$ is such that $\ell(t)< \frac{\ell(\sqrt{5}s)}{2}$, clearly, $x_i, x_j \in \Phi\cap\left(S_1(\ba)\cup S_2(\ba)\right)$ are connected in $SSG^e$.
\end{proof}

\begin{defn} An open component of $\bS$ is the sequence of connected open edges of $\bS$.
\end{defn}

\begin{defn} A circuit in $\bS$ or $\bS'$ is a connected path of $\bS$ or $\bS'$ which starts and ends at the same point.
A circuit in  $\bS$ or $\bS'$ is defined to be open/closed if all the edges on the circuit are open/closed in $\bS$ or $\bS'$.
\end{defn}

Some important properties of $\bS$ and $\bS'$ which are immediate are as follows.
\begin{lemma}\label{lem:int1} If the cardinality of the open component of $\bS$ containing the origin is infinite, then $|C_{x_1}|= \infty$.
\end{lemma}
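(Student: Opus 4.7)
The plan is to transfer the hypothesised infinite open component of $\bS$ into an infinite connected component of $SSG^e \subseteq SSG$ containing $x_1$, by iterating Lemma~\ref{lem:open} along the open path. This follows the standard Dousse-style correspondence between bond percolation on a primal lattice and continuum connectivity on the underlying point process, with Lemma~\ref{lem:open} playing the role of the ``open edge implies local connectivity'' bridge.

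First I would locate the origin in the lattice. Since $x_1$ lies at the origin, it is contained in a unique lattice square $S_0$ of $\bS$. The hypothesis that the open component of $\bS$ containing the origin has infinite cardinality then yields, in the natural plaquette reading, an infinite sequence of distinct squares $S_0, S_1, S_2, \ldots$ of $\bS$ such that for each $i \ge 1$ the squares $S_{i-1}$ and $S_i$ share a common lattice edge $\ba_i$ that is open in the sense of Definition~\ref{def:open}.

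Next I would use item~(1) of Definition~\ref{def:open} to pick a legitimate node $y_i \in \Phi \cap S_i$ for each $i \ge 0$, setting $y_0 := x_1$. For every $i \ge 1$, both $y_{i-1}$ and $y_i$ lie in $S_1(\ba_i) \cup S_2(\ba_i) = S_{i-1} \cup S_i$, so Lemma~\ref{lem:open} yields $y_{i-1} \to y_i$ in $SSG^e$. Concatenating these one-step directed links produces a path $x_1 = y_0 \to y_1 \to y_2 \to \cdots$ in $SSG^e$, so $y_i \in C_{x_1}$ for every $i$; since the $S_i$ are pairwise distinct, the $y_i$ are distinct legitimate nodes, and consequently $|C_{x_1}| = \infty$ in $SSG^e$, hence in $SSG$.

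The only place where care is really required is the reformulation in the second paragraph: reading the infinite open edge-component as a plaquette-adjacency path of distinct squares joined by open edges, so that Lemma~\ref{lem:open} can be chained (each $\ba_i$ acting as the shared open bond whose two adjacent squares are exactly $S_{i-1}$ and $S_i$). This is the standard dual picture for bond percolation on $\bS$, and once it is in place the rest of the argument is a direct iteration of Lemma~\ref{lem:open} with no further nontrivial content.
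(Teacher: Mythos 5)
Your overall strategy---iterate Lemma~\ref{lem:open} along the open component and glue consecutive steps through a shared legitimate node---is exactly what the paper intends (its entire proof is the line ``Follows from Lemma~\ref{lem:open}''). However, the step you flag as merely requiring ``care,'' namely the plaquette reading, is where the argument has a genuine gap: it does not follow from the hypothesis. The open component containing the origin is a connected set of open \emph{edges}; two consecutive edges of a lattice path share a vertex, and only when they are \emph{perpendicular} do their square-pairs $S_1(\ba)\cup S_2(\ba)$ and $S_1(\ba')\cup S_2(\ba')$ share a square. When consecutive open edges are \emph{collinear}, the two $s\times 2s$ rectangles are disjoint, and no square of the first pair shares a common lattice edge with a square of the second pair. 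Concretely, if the open component were exactly a bi-infinite straight line of horizontal open edges along the $x$-axis (which satisfies the hypothesis of the lemma), then the graph whose vertices are lattice squares and whose adjacencies are ``the two squares share a common \emph{open} lattice edge'' consists of isolated two-square pairs (each open edge joins only the square above it to the square below it), so no infinite sequence $S_0,S_1,S_2,\dots$ of the kind you postulate exists. Thus your second paragraph asserts something strictly stronger than what the hypothesis gives, and the subsequent chaining collapses precisely in the collinear case: Lemma~\ref{lem:open} interconnects nodes inside a \emph{single} rectangle $S_1(\ba)\cup S_2(\ba)$ and says nothing about a node in the rectangle of $\ba_i$ connecting to a node in the rectangle of a collinear successor $\ba_{i+1}$.

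To close the gap you must treat collinear consecutive open edges directly rather than appeal to a ``standard dual picture.'' One way: take an infinite self-avoiding path of open edges from the origin; for perpendicular consecutive edges your gluing through the shared square (which is nonempty by condition (1)) is fine; for collinear consecutive edges observe that the two squares lying on the same side of the two edges again form an $s\times 2s$ rectangle of diameter $\sqrt{5}s$, the transmitter has no eavesdropper within distance $t$ by the $Y(\ba_i)$ condition, and the interference at the receiver is controlled by condition (3) for $\ba_{i+1}$ up to one additional term bounded by $1$, so the same computation as in Lemma~\ref{lem:open} yields the link provided $t$ satisfies a slightly stronger requirement such as $\ell(t)<\ell(\sqrt{5}s)/(2+\gamma)$ (alternatively, strengthen Definition~\ref{def:open} so that the regions attached to adjacent edges overlap, as in the construction of Dousse et al.). In fairness, the paper's own one-line proof glosses over the same point; but since your write-up makes the plaquette reduction explicit, that reduction must be justified or replaced---as stated it is false.
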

\begin{proof} Follows from Lemma \ref{lem:open}.
\end{proof}
%\begin{figure}
%\centering
%\includegraphics[width=3in]{squarelattice.eps}
%\caption{Square lattice and its dual lattice with side s.}
%\label{fig:lattice}
%\end{figure}

%\begin{figure}
%\centering
%\includegraphics[width=3.5in]{squarelatticeexclusionregion.eps}
%\caption{Open edge definition where black dots represent a legitimate node while a cross
%represents an eavesdropper node.}
%\label{fig:open}
%\end{figure}

\begin{figure} [ht]
\centering
\scalebox{.5}{\input{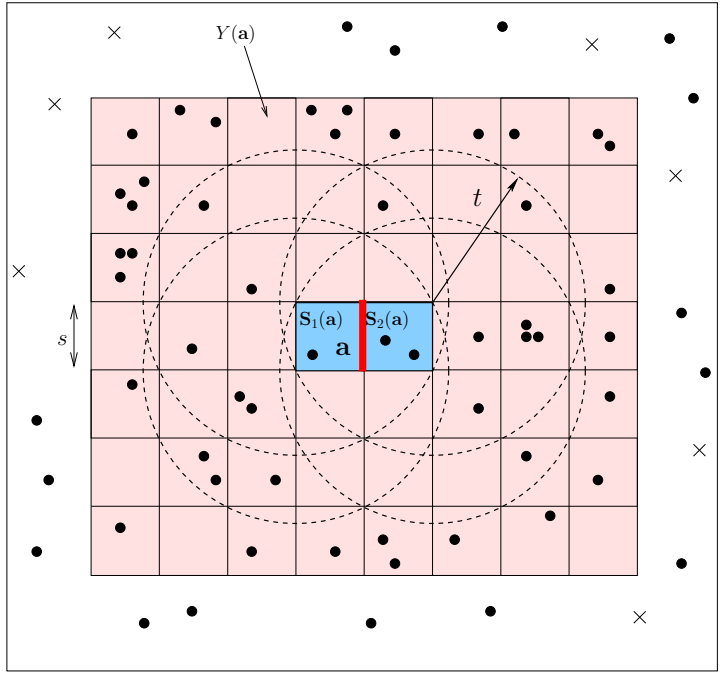}}
\vspace{0.1in}
\caption{Open edge definition on a square lattice for super-critical regime.}
\label{fig:supcrit}
\end{figure}

\begin{lemma}\label{lem:int2}\cite{Grimmett1980} The open component of $\bS$  containing the origin is  finite if and only if  there is a closed circuit in $\bS'$  surrounding the origin.
\end{lemma}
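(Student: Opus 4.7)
The plan is to invoke the standard planar duality argument for bond percolation on the square lattice, as found in Grimmett's monograph. First I would fix the natural bijection between edges of $\bS$ and edges of $\bS'$: each edge of $\bS'$ crosses exactly one edge of $\bS$ transversally (the two halves of the dual edge lie in the two squares adjacent to the primal edge). Using this bijection I would declare an edge of $\bS'$ to be \emph{closed} precisely when the corresponding primal edge of $\bS$ fails to be open in the sense of Definition \ref{def:open}. All subsequent reasoning then takes place at the level of this primal/dual correspondence, independent of the specific open-edge rule.

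For the ``if'' direction, suppose there is a closed circuit $\Gamma$ in $\bS'$ surrounding the origin. By the Jordan curve theorem, $\Gamma$ encloses a bounded region of $\bbR^2$ that contains the origin. Any infinite self-avoiding path in $\bS$ starting at the origin must eventually leave this bounded region, and since it is a continuous curve in $\bbR^2$ it must cross $\Gamma$ along some dual edge. The primal edge traversed at that crossing is, by construction, closed, so the path cannot consist entirely of open edges. Hence the open component of the origin is contained in the finite set of vertices of $\bS$ lying inside $\Gamma$, and is therefore finite.

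For the ``only if'' direction, suppose the open component $C$ of the origin in $\bS$ is finite. Let $\widetilde{C} \subset \bbR^2$ be the union of the closed unit cells of $\bS$ centred at vertices of $C$ together with the closed primal edges between them (the standard ``fattening'' of $C$). Every primal edge of $\bS$ joining a vertex of $C$ to a vertex outside $C$ must be closed, for otherwise the component would be larger. The topological boundary $\partial \widetilde{C}$ is therefore traversed transversally only by these closed primal edges, and at each such crossing we pick up the corresponding dual edge of $\bS'$. A standard planar-topology argument shows that these dual edges decompose into a disjoint union of simple circuits in $\bS'$; since $C$ is finite and contains the origin, the unique outermost such circuit surrounds the origin, and by construction every one of its edges is closed.

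The main obstacle is the combinatorial-topological step in the second direction: verifying that the closed dual edges incident to the boundary of the open cluster actually organise themselves into a simple circuit enclosing the origin, as opposed to a more complicated subgraph. The verification relies on the fact that $\bS$ is a planar graph with every bounded face incident to exactly four edges, so the dual-degree conditions force the boundary dual edges to form closed simple loops; this is the classical Whitney/Kesten duality argument, and since the lemma is quoted from \cite{Grimmett1980} one may simply cite it there.
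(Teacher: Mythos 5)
The paper offers no proof of this lemma at all---it simply cites \cite{Grimmett1980}---and your argument is precisely the standard planar duality argument (Jordan curve for the ``if'' direction, the Whitney/Kesten outer-boundary circuit for the ``only if'' direction) that the citation stands for, with the one genuinely delicate topological step correctly identified and legitimately deferred to the reference. So your proposal is correct and takes essentially the same (cited) route as the paper.
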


Hence, if we can show that the probability that there exists a closed circuit in $\bS'$ surrounding the origin is less than one, then it follows that an unbounded connected component exists in $\bS$ with non-zero probability. Moreover, having an unbounded connected component in the square lattice $\bS$ implies that there is an unbounded connected component in $SSG$ from Lemma \ref{lem:open}. Next, we find a bound on $\lambda$ as a function of $\lambda_E$ such that probability of having a closed circuit in $\bS'$  surrounding the origin is less than one. This is a standard approach used for establishing the existence of percolation in discrete graphs.

For an edge $\ba$, let $A(\ba)=1$ if $\Phi \cap S_i(\ba) \ne
\emptyset$, $i=1,2$,
%there are one or more than one nodes of $\Phi$ in both the adjacent squares $S_1(e)$ and $S_2(e)$ of $e$,
and zero otherwise. Similarly, let $B(\ba)=1$ ($= 0$) if $I_j^i \bydef \sum_{k\in \Phi, k\ne i }\ell(x_k,x_j) \le  \frac{1}{\gamma}$
for $x_i, x_j \in \Phi \cap \left(S_1(\ba)\cup S_2(\ba)\right)$ (otherwise), and $C(\ba)=1$ ($= 0$) if there are no eavesdropper nodes in $Y(\ba)$ (otherwise).
Then by definition, the edge $\ba$ is open if $D(\ba) = A(\ba)B(\ba)C(\ba)=1$.

Now we want to bound the probability of having a closed circuit
surrounding the origin in $\bS$. Towards that end, we will first
bound the probability of a closed circuit of length $n$, i.e.
$P(D(\ba_1)= 0, D(\ba_2)=0, \ldots, D(\ba_n)=0), \ \forall \ n \in
\bbN$ considering $n$ distinct edges. Let $p_A \bydef
P(A(\ba_i)=0)$ for any $i$. Since $\Phi$ is a PPP with density
$\lambda$, $p_A = 1-(1-e^{-\lambda s^2 })^2$. Then we have the
following intermediate results to upper bound $ P(D(\ba_1)=0,
D(\ba_2)=0, \ldots, D(\ba_n)=0)$.
\begin{lemma}\label{lem:a} $P(A(\ba_1)=0, A(\ba_2)=0, \ldots, A(\ba_n)=0) \le   p_1^{n}$, where $p_1 \bydef p_{A}^{1/4}$.
\end{lemma}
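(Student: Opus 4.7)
The plan is to extract from the given $n$ edges a subfamily of at least $n/4$ edges on which the events $\{A(\ba_i) = 0\}$ are \emph{mutually independent}, and then bound the joint probability by $p_A^{n/4}$.

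The key observation is that $\{A(\ba) = 0\}$ is determined entirely by the restriction of the PPP $\Phi$ to $S_1(\ba) \cup S_2(\ba)$. Since $\Phi$ is Poisson, events determined by the restrictions of $\Phi$ to disjoint Borel sets are independent. So it suffices to identify a subfamily of at least $n/4$ edges whose adjacent-square unions $S_1(\ba) \cup S_2(\ba)$ are pairwise disjoint.

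To produce such a subfamily, I would partition the edges of $\bS$ into four classes according to (i) orientation (horizontal vs.\ vertical) and (ii) parity of the index in the transverse direction (for a horizontal edge at height $ls$, the parity of $l$; for a vertical edge at abscissa $ks$, the parity of $k$). A direct check shows that within any one of these four classes, distinct edges have disjoint adjacent-square unions: for example, two horizontal edges in the ``even-$l$'' class at heights $2m_1 s$ and $2m_2 s$ have adjacent squares indexed by $\{(k_i, 2m_i - 1), (k_i, 2m_i)\}$; if $m_1 \neq m_2$ then the $l$-index ranges $\{2m_i - 1, 2m_i\}$ differ by at least one shift of length $2$ and so are disjoint, while if $m_1 = m_2$ the edges can be distinct only if $k_1 \neq k_2$, which again forces all four squares to be distinct. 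The analogous verification handles the remaining three classes.

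By the pigeonhole principle, among the $n$ given edges at least $\lceil n/4 \rceil$ lie in a common class; call this subfamily $\ba_{i_1}, \ldots, \ba_{i_m}$ with $m \geq n/4$. On this subfamily the events $\{A(\ba_{i_j}) = 0\}$ are mutually independent, each with probability $p_A$, so
\[
P(A(\ba_1) = 0, \ldots, A(\ba_n) = 0) \,\leq\, P(A(\ba_{i_1}) = 0, \ldots, A(\ba_{i_m}) = 0) \,=\, p_A^{m} \,\leq\, p_A^{n/4} \,=\, p_1^{n},
\]
where the last inequality uses $p_A \in [0,1]$. The only non-routine step is picking the right four-coloring of the edges and checking pairwise disjointness within a class; once that is done, everything else is PPP independence plus pigeonhole.
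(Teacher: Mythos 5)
Your proof is correct and follows essentially the same route as the paper: extract at least $n/4$ edges whose adjacent-square unions $S_1(\ba)\cup S_2(\ba)$ do not overlap, invoke independence of the PPP on disjoint regions, and bound the joint probability by $p_A^{n/4}=p_1^n$. The only difference is that you make explicit, via the orientation-plus-parity four-coloring and pigeonhole, the combinatorial selection step that the paper simply asserts, which is a welcome (and correct) elaboration rather than a different argument.
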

\begin{proof} Follows from the fact that in any sequence of  $n$ edges of $\bS $ there are at least $n/4$ edges such that their adjacent squares
$S_1(\ba_e)\cup S_2(\ba_e)$ do not overlap. Therefore $P(A(\ba_1)=0, A(\ba_2)=0, \ldots, A(\ba_n)=0) \le P(\cap_{e\in O}A(\ba_e)=0)$, where $O$ is the
set of edges for which their adjacent squares $S_1(\ba_e)\cup S_2(\ba_e)$ have no overlap, and  $|O|=n/4$. Since $S_1(\ba_e)\cup S_2(\ba_e), \ e \in O$
have no overlap, and events $A(\ba_e)=0$ are independent for $\ba_e\in O$,  the result follows.
\end{proof}
\begin{lemma}\label{lem:b} \cite[Proposition 2]{Dousse2006} For $\int_{0}^{\infty} x\ell(x) dx < \infty$, $P(B(\ba_1)=0, B(\ba_2)=0, \ldots, B(\ba_n)=0) \le  p_2^{n}$,
where $p_2 \bydef e^{\left(\frac{2\lambda}{K}\int \ell(x) dx - \frac{1}{\gamma K} \right)}$, and $K$ is a constant.
\end{lemma}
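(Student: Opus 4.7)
The plan is to follow the original approach of Proposition 2 in \cite{Dousse2006}, combining an exponential Markov inequality with the Laplace functional of the PPP $\Phi$. First, for each edge $\ba$ of $\bS$, I would introduce a surrogate interference quantity
$$\hat I(\ba) \bydef \sum_{x_k \in \Phi} \tilde\ell(x_k - c_{\ba}),$$
where $c_{\ba}$ is the center of $S_1(\ba)\cup S_2(\ba)$ and $\tilde\ell$ uniformly dominates $\ell(x_k,x_j)$ for every $x_j$ in the rectangle (e.g., $\tilde\ell(y)=\ell(\max(|y|-\sqrt 5 s/2,0))$). Under the hypothesis $\int x\ell(x)dx<\infty$ one gets $\int \tilde\ell(y) dy < \infty$ as well. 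Since for any $x_i,x_j \in \Phi\cap (S_1(\ba)\cup S_2(\ba))$ the quantity $I_j^i$ is bounded above by $\hat I(\ba)$, the event $\{B(\ba)=0\}$ is contained in $\{\hat I(\ba)>1/\gamma\}$.

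Next, for any $s>0$ I would apply the Chernoff/exponential-Markov bound to the joint event:
$$P(B(\ba_1)=0,\ldots,B(\ba_n)=0) \le P\Big(\bigcap_{i=1}^n \{\hat I(\ba_i)>1/\gamma\}\Big) \le e^{-sn/\gamma}\, E\Big[\exp\Big(s\sum_{i=1}^n \hat I(\ba_i)\Big)\Big].$$
The key step is that the summand inside the exponent is itself a linear functional of the Poisson point process $\Phi$, namely $s\sum_{x_k\in\Phi}\sum_i\tilde\ell(x_k-c_{\ba_i})$. The Laplace functional of a PPP of intensity $\lambda$ then yields
$$E\Big[\exp\Big(s\sum_i \hat I(\ba_i)\Big)\Big]=\exp\Big(\lambda\int_{\bbR^2}\big(e^{s\sum_i\tilde\ell(y-c_{\ba_i})}-1\big)dy\Big),$$
which crucially does \emph{not} require any independence between the different $\hat I(\ba_i)$'s; all cross-correlations are captured through the single integrand.

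To finish, I would use a Taylor estimate of the form $e^u-1\le 2u$ valid in a suitable range of $u$, so that for $s$ small enough
$$\lambda\int\big(e^{s\sum_i \tilde\ell(y-c_{\ba_i})}-1\big)dy \;\le\; 2\lambda s\int \sum_i \tilde\ell(y-c_{\ba_i})\,dy \;=\; 2\lambda s n\int \tilde\ell(y)\,dy.$$
Combining this with the exponential Markov bound and choosing $s=1/K$ for the constant $K$ that enforces the Taylor step gives
$$P(B(\ba_1)=0,\ldots,B(\ba_n)=0)\le \exp\Big(n\Big(\tfrac{2\lambda}{K}\int \tilde\ell(y)dy-\tfrac{1}{\gamma K}\Big)\Big)=p_2^n,$$
as claimed (after replacing $\int\tilde\ell$ by $\int\ell$ up to an absolute constant absorbed into $K$).

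The main technical obstacle is the Taylor bound step: near each center $c_{\ba_i}$ the sum $s\sum_i\tilde\ell(y-c_{\ba_i})$ can fail to lie in the range where $e^u-1\le 2u$ holds. This is handled by splitting the integration domain into a bulk where the Taylor estimate holds and a residual near the centers whose contribution is absorbed by tuning the constant $K$ (it is here that the finite first-moment assumption $\int x\ell(x)dx<\infty$ is used to control $\int \tilde\ell$). Once the Taylor bound is secured, the rest is straightforward algebra.
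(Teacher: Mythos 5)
The paper itself does not prove this lemma --- it defers entirely to the citation \cite[Proposition 2]{Dousse2006} --- and your argument is a faithful reconstruction of that cited proof: dominate each $I_j^i$ by a shot-noise functional centered at the edge, apply the exponential Chebyshev/Chernoff bound to the sum over the $n$ edges, and evaluate the resulting exponential moment with the Laplace functional of the PPP, which is exactly how the dependence between edges is handled in \cite{Dousse2006}. The one place where your write-up is vaguer than necessary is the Taylor step: there is no need to split the domain and absorb a residual, since $\ell \le 1$ and the centers $c_{\ba_i}$ of distinct lattice edges are separated by at least $s/\sqrt{2}$, so $\sum_i \tilde\ell(y-c_{\ba_i})$ is bounded uniformly in $y$ and $n$ by a constant depending only on $s$ and $\int x\ell(x)dx$, and one can pick the Chernoff parameter small enough that $e^u-1\le 2u$ applies globally; with that repair (and constants absorbed into $K$, as the statement permits) your proof is correct and matches the approach the paper relies on.
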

\begin{lemma}\label{lem:c} $P(C(\ba_1)=0, C(\ba_2)=0, \ldots, C(\ba_n)=0) \le   p_3^{n}$, for some $p_3$ independent of $n$.
\end{lemma}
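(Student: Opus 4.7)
The plan is to mimic the strategy used for Lemma \ref{lem:a}: extract from the sequence of $n$ edges a sub-collection whose associated $Y$-regions are pairwise disjoint, then invoke the independence property of the Poisson process $\Phi_E$ on disjoint sets.

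First I would observe that the event $\{C(\ba)=0\}$ is determined entirely by the restriction of $\Phi_E$ to $Y(\ba)$, and that $\nu(Y(\ba))$ is a constant $v$ depending only on the lattice side $s$ and the range parameter $t$, not on the edge $\ba$. In particular, $P(C(\ba)=0) = 1 - e^{-\lambda_E v}$ for every $\ba$.

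Next I would establish a combinatorial fact: there is a constant $M = M(s,t) \in \bbN$ such that from any collection of $n$ distinct edges of $\bS$ one can select a subset $O$ of cardinality at least $n/M$ for which the squares $\{Y(\ba_e)\}_{e\in O}$ are pairwise disjoint. This follows because every $Y(\ba)$ is contained in a square of side at most $s(1 + \lceil 2t/s\rceil) + 2s$, so any given $Y(\ba)$ can intersect at most $M$ translates $Y(\ba')$ taken from a grid of edges; a standard greedy/packing argument (repeatedly pick an edge and discard its at most $M-1$ neighbours whose $Y$-regions overlap it) yields the subset $O$.

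Then I would conclude by independence. Since the $\{Y(\ba_e)\}_{e\in O}$ are disjoint, the events $\{C(\ba_e)=0\}_{e\in O}$ are mutually independent. Therefore
\begin{equation*}
P(C(\ba_1)=0,\ldots,C(\ba_n)=0) \;\le\; P\!\left(\bigcap_{e\in O}\{C(\ba_e)=0\}\right) \;=\; \left(1-e^{-\lambda_E v}\right)^{|O|} \;\le\; \left(1-e^{-\lambda_E v}\right)^{n/M},
\end{equation*}
so the claim holds with $p_3 \bydef (1-e^{-\lambda_E v})^{1/M}$, which is independent of $n$. The only real obstacle is bookkeeping the geometric constant $M$ correctly; once the $Y$-regions are shown to have bounded overlap degree, the probabilistic part is immediate.
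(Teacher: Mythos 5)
Your proposal is correct and follows essentially the same route as the paper: independence of $\{C(\ba_e)=0\}$ over edges with disjoint $Y$-regions, plus a bounded-overlap/packing argument extracting a subset of size proportional to $n$ (the paper's constant $\psi$ plays the role of your $M$), giving $p_3 = P(C(\ba)=0)^{1/\psi}$. Incidentally, your value $1-e^{-\lambda_E v}$ for $P(C(\ba)=0)$ is the correct one; the paper's stated $e^{-\lambda_E \nu(Y(\ba_i))}$ is the probability of $C(\ba_i)=1$ and appears to be a typo.
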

\begin{proof}
By definition, events $C(\ba_i)$ and $C(\ba_j)$ are independent if $Y(\ba_i) \cap Y(\ba_j) = \phi$. Consider a circuit ${\cal P}_n$ in $\bS$ of length $n$, with a subset ${\cal P}_n^s \subset {\cal P}_n$, where ${\cal P}_n^s = \{\ba_i\}_{i\in{\cal I}}$, where for any $n, m \in {\cal I},
Y(\ba_n) \cap Y(\ba_m) = \phi$. Since $Y(\ba)$ occupies at most $\left(L+\left \lceil \frac{2t}{s}\right\rceil\right) \times \left(L+1+\left \lceil \frac{2t}{s}\right\rceil\right)$ squares of lattice $\bS$,
where $L= 2 \left \lceil \sqrt{5} \right\rceil$, it follows that
$|{\cal I}| \ge \frac{n}{\psi}$, where $\psi = 8\left(L+\left \lceil \frac{2t}{s}\right\rceil\right)^2-1$.
Thus, $P(C(\ba_1)=0, C(\ba_2)=0, \ldots, C(\ba_n)=0) \le p_3^{n}$, where $p_3= P(C(\ba_i)=0)^{\frac{1}{\psi}}$ and $P(C(\ba_i)=0) = e^{-\lambda_E \nu(Y(\ba_i))}$.
\end{proof}
\begin{lemma}\label{lem:d}  $P(D(\ba_1)=0, D(\ba_2)=0, \ldots, D(\ba_n)=0) \le (\sqrt{p_1} + {p_2}^{1/4} + {p_3}^{1/4})^n$.
\end{lemma}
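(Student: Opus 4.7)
The plan is to combine Lemmas \ref{lem:a}, \ref{lem:b}, \ref{lem:c} by expanding the failure event $\{D(\ba_i)=0\}$ into a union of the three possible causes, and then applying H\"older's inequality with the exponent triple $(2,4,4)$, which satisfies $\tfrac12+\tfrac14+\tfrac14 = 1$. Because $D(\ba_i) = A(\ba_i) B(\ba_i) C(\ba_i)$ is a product of $\{0,1\}$-valued indicators,
\begin{equation*}
\{D(\ba_i)=0\} \;=\; \{A(\ba_i)=0\} \cup \{B(\ba_i)=0\} \cup \{C(\ba_i)=0\},
\end{equation*}
so distributing the intersection over the union yields
\begin{equation*}
\bigcap_{i=1}^n \{D(\ba_i)=0\} \;=\; \bigcup_{f\in\{A,B,C\}^n} \;\bigcap_{i=1}^n \{f(i)(\ba_i)=0\}.
\end{equation*}

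Next, I would apply the union bound and H\"older's inequality to each term separately. For a fixed assignment $f$, let $n_A, n_B, n_C$ denote the number of $i \in \{1,\dots,n\}$ with $f(i)=A,B,C$, respectively (so $n_A+n_B+n_C=n$), and define $E_X^f := \bigcap_{i:\, f(i)=X} \{X(\ba_i)=0\}$ for $X\in\{A,B,C\}$. Writing the indicator of the intersection as the product $\mathbf 1_{E_A^f}\,\mathbf 1_{E_B^f}\,\mathbf 1_{E_C^f}$ and applying H\"older's inequality with exponents $(2,4,4)$,
\begin{equation*}
P\!\left(\bigcap_{i=1}^n \{f(i)(\ba_i)=0\}\right) \;\le\; P(E_A^f)^{1/2}\, P(E_B^f)^{1/4}\, P(E_C^f)^{1/4}.
\end{equation*}
Invoking Lemmas \ref{lem:a}, \ref{lem:b}, and \ref{lem:c} on the three subsets of edges (each lemma being valid for an arbitrary number of edges) gives $P(E_A^f) \le p_1^{n_A}$, $P(E_B^f) \le p_2^{n_B}$, and $P(E_C^f) \le p_3^{n_C}$, so the bound above becomes $p_1^{n_A/2}\, p_2^{n_B/4}\, p_3^{n_C/4}$.

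Finally, summing the union bound over all $3^n$ assignments $f$, grouping by the triple $(n_A, n_B, n_C)$, and invoking the multinomial theorem yields
\begin{equation*}
P\!\left(\bigcap_{i=1}^n \{D(\ba_i)=0\}\right) \;\le\; \sum_{n_A+n_B+n_C=n} \binom{n}{n_A,n_B,n_C}\, p_1^{n_A/2}\, p_2^{n_B/4}\, p_3^{n_C/4} \;=\; \bigl(\sqrt{p_1} + p_2^{1/4} + p_3^{1/4}\bigr)^n,
\end{equation*}
which is the claimed bound. The only real obstacle is the joint dependence among the events, notably that $A$ and $B$ both depend on $\Phi$ (while $C$ depends only on the independent process $\Phi_E$), which forbids any straightforward product factorization of probabilities; H\"older's inequality sidesteps this because it requires no independence assumption. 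The specific choice of exponents $(2,4,4)$ is what produces the asymmetric powers $\tfrac12, \tfrac14, \tfrac14$ appearing in the statement, and any other H\"older-admissible triple would give an analogous (though numerically different) bound.
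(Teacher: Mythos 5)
Your proof is correct. The paper itself gives no argument for this lemma: it simply invokes \cite[Proposition 3]{Dousse2006}, which is the two-event version of this bound (proved there by distributing the failure event over all assignments, applying Cauchy--Schwarz, and summing with the binomial theorem); to cover three factors one either iterates that proposition, getting $(\sqrt{p_1}+\sqrt{\sqrt{p_2}+\sqrt{p_3}})^n$ and then using subadditivity of $\sqrt{\cdot}$ to reach $(\sqrt{p_1}+p_2^{1/4}+p_3^{1/4})^n$, or proves the three-factor statement directly. Your one-shot H\"older argument with exponents $(2,4,4)$ combined with the multinomial theorem is exactly that direct three-factor generalization: it is the same underlying mechanism as the cited proposition, but it produces the asymmetric exponents $\tfrac12,\tfrac14,\tfrac14$ in a single step and makes the lemma self-contained, which is arguably cleaner than the citation. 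Your observation that H\"older requires no independence is the right justification for why the dependence of $A(\ba)$ and $B(\ba)$ on the common process $\Phi$ causes no trouble. The only point to keep explicit (you do flag it) is that Lemmas \ref{lem:a}--\ref{lem:c} are applied to the sub-collections $\{i: f(i)=X\}$, so they must hold for arbitrary finite collections of distinct edges, not merely for consecutive edges of a circuit; this is indeed how those lemmas are stated and how their proofs work, and the same requirement is implicit in the paper's use of \cite[Proposition 3]{Dousse2006}.
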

\begin{proof} Follows from \cite[Proposition 3]{Dousse2006}, where event $D(\ba)=1$ if $A(\ba)B(\ba)C(\ba)=1$.
\end{proof}
Let $q \bydef (\sqrt{p_1} + {p_2}^{1/4} + {p_3}^{1/4})$.  The next Lemma characterizes an upper bound on
$q$ for which the probability of having a closed circuit in $\bS$
surrounding the origin is less than one.
\begin{lemma}\label{lem:suffcond} If $q < \frac{11-2\sqrt{10}}{27}$, then the probability of having a closed circuit in $\bS'$
surrounding the origin is less than one.
\end{lemma}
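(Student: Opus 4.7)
The plan is to apply the standard Peierls contour argument to the bond percolation on $\bS'$. Let $\mathcal{A}$ denote the event that some closed circuit in $\bS'$ surrounds the origin, and decompose $\mathcal{A}=\bigcup_{n\ge 4}\mathcal{A}_n$ according to the length of the circuit. The union bound gives $P(\mathcal{A}) \le \sum_{n\ge 4} N(n)\,q^n$, where $N(n)$ denotes the number of distinct length-$n$ circuits in $\bS'$ surrounding the origin, and the $q^n$ factor is supplied by Lemma~\ref{lem:d} applied to the $n$ primal edges associated with the $n$ dual edges of the circuit (a dual edge being declared closed iff its primal $\ba$ satisfies $D(\ba)=0$).

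The geometric ingredient is the standard bound $N(n) \le 4n\cdot 3^{n-2}$. Its derivation is a dual-lattice crossing argument as in \cite{Dousse2006}: any circuit in $\bS'$ of length $n$ surrounding the origin must contain an edge whose midpoint lies on the positive $x$-axis at distance at most $n/4$ from the origin, so there are at most $n/4$ candidate crossing edges; from each such edge, the circuit is a self-avoiding walk of $n$ edges returning to it, and the count of such walks is bounded by routine considerations (four directions for the first step, at most three for each subsequent step, divided by the orientation multiplicity) to produce the stated bound.

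Substituting into the union bound and using the identity $\sum_{n\ge 1} n x^n = x/(1-x)^2$, valid for $|x|<1$, yields $P(\mathcal{A}) \le \sum_{n\ge 1} (4n/9)(3q)^n = 4q/[3(1-3q)^2]$, which is finite whenever $3q<1$. The inequality $4q/[3(1-3q)^2] < 1$ rearranges to $27q^2-22q+3>0$, a quadratic whose smaller root is precisely $q^{\star}=(11-2\sqrt{10})/27$. Hence for $q<q^{\star}$ the expected number of closed circuits surrounding the origin is strictly less than one, so by Markov's inequality the probability that at least one such circuit exists is strictly less than one, which is the claim.

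The main technical obstacle is the circuit count $N(n)$ itself: the choice of distinguished crossing edge and the handling of orientation must be made carefully so as neither to double count nor to miss any surrounding circuit. Once that combinatorial bound is pinned down, the remainder is an elementary manipulation of a geometric series, calibrated so as to produce precisely the explicit threshold $(11-2\sqrt{10})/27$ stated in the lemma.
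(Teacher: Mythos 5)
Your proposal is correct and follows essentially the same route as the paper: a Peierls union bound over circuit lengths, the circuit count $N(n)\le 4n\cdot 3^{n-2}$, the closed-circuit probability $q^n$ from Lemma~\ref{lem:d}, summation of the series to $\frac{4q}{3(1-3q)^2}$, and the quadratic $27q^2-22q+3>0$ giving the threshold $\frac{11-2\sqrt{10}}{27}$. The only quibble is your intermediate claim of ``at most $n/4$ candidate crossing edges'' on the positive $x$-axis (the paper uses the cruder but safe bound of $n$ intersections); this is immaterial since your final combinatorial bound coincides with the paper's.
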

\begin{proof} 
For any circuit of length $n$, there are $4$ possible choices of edges for the starting step and thereafter $3$ choices for every step, except for last step which is fixed given the rest of choice of edges since the circuit has to terminate at the starting point. Moreover, for a circuit containing the origin, the maximum possible distinct intersections with the $x$-axis are $n$. 
Thus, the number of possible circuits of length $n$ around the origin is less than or equal to  $4n3^{n-2}$.
From Lemma \ref{lem:d}, we know that the probability of a closed
circuit of length $n$ is upper bounded by $q^n$. Thus,
\begin{eqnarray*}
P(\text{closed circuit around origin}) &\le & \sum_{n=1}^{\infty} 4n3^{n-2} q^n,\\
&=& \frac{4q}{3(1-3q)^2},
\end{eqnarray*}
which is less than $1$ for $q < \frac{11-2\sqrt{10}}{27}$.
\end{proof}
{\bf Proof of Theorem~\ref{thm:sup}}. Following Lemmas
\ref{lem:int2} and \ref{lem:suffcond}, it suffices to show that
$q$ can be made arbitrarily small for an appropriate choice of
parameters. For any eavesdropper density $\lambda_E$, $p_3$ can be
made arbitrarily small by choosing small enough $s$ and $t$.
Depending on the choice of $s$, $p_1$ can be made arbitrarily
small for large enough legitimate node density $\lambda$, and
finally depending on the choice of $\lambda$, choosing small
enough $\gamma$, $p_2$ can be made arbitrarily small. \qed

\section{Proof of Theorem~\ref{thm:inf}}
In this section, we are interested in the sub-critical regime of percolation, i.e. obtaining a lower bound on $\lambda_c$ as a function of $\lambda_E$ for which percolation does not happen.
We consider the case of $\gamma= 0$, where $x_i$ and $x_j$ are connected in the $SSG$ if $$\ell(d_{ij}) < \left(\frac{\ell(x_i,e)}{1+\sum_{x_j\in \Phi, j\ne i}\ell(x_j,e)}\right), \ \forall \ e\in \Phi_E.$$
If we can show that $\lambda_c > \lambda_0$ for $\gamma =0$, then  since $SSG$ with $\gamma>0$ is contained in $SSG$ with $\gamma=0$, we have that for all $\gamma >0$, $\lambda_c > \lambda_0$. So the lower bound $\lambda_0$ for $\lambda_c$ obtained with $\gamma=0$ serves as a universal lower bound on the critical density $\lambda_c$ required for percolation. Let the interference power received at any eavesdropper with respect to signal from $x_k$ is $I^k_e \bydef \sum_{x_j\in \Phi, j\ne k}\ell(x_j,e)$.

For the case of $\gamma=0$, we proceed as follows. We tile $\bbR^2$
into a square lattice $\bM$ with side $M$. Let  $\bM' =
\bM+(\frac{M}{2}, \frac{M}{2})$ be the dual lattice of $\bM$
obtained by translating each edge of $\bM$ by $(\frac{M}{2},
\frac{M}{2})$. For any edge $\bee$ of $\bM$, let $S_1(\bee)$ and
$S_2(\bee)$ be the two adjacent squares to $\bee$. See Fig.
\ref{fig:subcrit} for a pictorial description. Let $T_1(\bee)$ and
$T_2(\bee)$ be the smaller squares of side $m$ contained inside
$S_1(\bee)$ and $S_2(\bee)$, respectively, as shown in Fig.
\ref{fig:subcrit}, with centers identical to that of $\bS_1$ and
$\bS_2$.

\begin{figure} [ht]
\centering
\scalebox{.6}{\input{JAPSubCrit.pstex_t}}
\vspace{0.1in}
\caption{Open edge definition on a square lattice for sub-critical regime.}
\label{fig:subcrit}
\end{figure}

\begin{defn}\label{def:open} For any edge $\bee$ of $\bM$, we define three indicator variables ${\tilde A}(\bee), {\tilde B}(\bee),$ and ${\tilde C}(\bee)$  as follows.\begin{enumerate}
\item ${\tilde A}(\bee) = 1$ if there is at least one eavesdropper node of $\Phi_E$ in both the adjacent squares $T_1(\bee)$ and $T_2(\bee)$.
\item ${\tilde B}(\bee) = 1$ if there are no legitimate nodes in $S_1(\bee)$ and $S_2(\bee)$.
\item ${\tilde C}(\bee) = 1$ if for any eavesdropper node $e \in \Phi_E \cap \left(T_1(\bee)\cup T_2(\bee)\right)$, the interference received from all the  legitimate nodes $I_e \bydef \sum_{x_k\in \Phi}\ell(x_k,e) \le c$.\end{enumerate}
\end{defn}
Then an edge $\bee$ is defined to be open if ${\tilde D}(\bee) = {\tilde A}(\bee){\tilde B}(\bee){\tilde C}(\bee) = 1$.
An open edge is pictorially described in Fig. \ref{fig:subcrit} by a blue edge $\bee$, where the black dots represent legitimate nodes while  crosses are used to
represent eavesdropper nodes.

\begin{lemma}\label{lem:stopping} For any $m$ and $c$, for large enough $M$, an edge $(x_i,x_j)\in SSG$ cannot cross an open edge $\bee$ of $\bM$.
\end{lemma}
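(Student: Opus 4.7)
The plan is to argue by contradiction: suppose $\bee$ is open and $(x_i,x_j)\in SSG$ has its segment $\overline{x_i x_j}$ crossing $\bee$, and derive a contradiction for $M$ chosen large enough in terms of $m$ and $c$. By the no-legitimate-nodes clause of the open-edge definition, neither $x_i$ nor $x_j$ lies in $S_1(\bee)\cup S_2(\bee)$; without loss of generality $x_i$ is on the $S_2(\bee)$-side of $\bee$ and $x_j$ on the $S_1(\bee)$-side. By the eavesdropper-presence clause, there is at least one eavesdropper $e\in T_2(\bee)$, which I will use as the witness that defeats the putative $SSG$ edge.

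The first main ingredient is geometric. Since $x_i\notin S_2$, $x_j\notin S_1$, and the segment $\overline{x_i x_j}$ crosses $\bee$ (which lies in the interior of the rectangle $S_1\cup S_2$), a direct planar argument forces $d_{ij}\geq M-o(M)$ as $M\to\infty$: the chord across the rectangle (of side $M$ along $\bee$) has length at least this. On the other hand, $e\in T_2\subset S_2$, and an extremal-placement analysis of $x_i$ (just outside $S_2$ near an endpoint of $\bee$) gives $d_{ie}\leq M/\sqrt{2}+o(M)$. Consequently there is an absolute constant $\rho>1$ (with $\rho\to\sqrt{2}$ in the regime $m/M\to 0$) such that $d_{ij}/d_{ie}\geq\rho$ uniformly over admissible configurations.

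The second main ingredient is the $SSG$ edge condition itself. For $\gamma=0$, $(x_i,x_j)\in SSG$ means
\[
\ell(d_{ij})\,(1+I_e^i)>\ell(d_{ie}).
\]
Invoking the interference bound $I_e^i\leq I_e\leq c$ from the third clause of openness, this rearranges to
\[
\frac{\ell(d_{ie})}{\ell(d_{ij})}<1+c.
\]
But the geometric bound $d_{ij}\geq\rho\,d_{ie}$ together with the monotone decay of $\ell$ gives $\ell(d_{ie})/\ell(d_{ij})\geq\ell(d_{ij}/\rho)/\ell(d_{ij})$, and choosing $M$ large forces $d_{ij}$ large, which drives this lower bound above $1+c$ via the decay of $\ell$ at infinity. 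This contradicts the previous display.

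The main obstacle is to convert the geometric ratio $\rho>1$ into an $\ell$-ratio that actually beats $1+c$ uniformly over configurations. A case analysis on where $x_i$ sits relative to $\partial S_2$ (side, above, or near a corner), combined with a triangle-inequality argument anchored at the crossing point $p=\overline{x_i x_j}\cap\bee$, handles most configurations comfortably; the extremal corner configuration is the tight one, and it is there that the decay of $\ell$ enters crucially, since the ratio $\rho$ alone does not grow with $M$. Verifying that the attenuation function $\ell$ assumed in the paper decays fast enough (so that $\ell(d_{ij}/\rho)/\ell(d_{ij})\to\infty$ as $d_{ij}\to\infty$) is the technical heart of the argument.
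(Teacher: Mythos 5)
Your skeleton is the same as the paper's: clause 2 of openness pushes $x_i,x_j$ outside $S_1(\bee)\cup S_2(\bee)$ so that $d_{ij}>M$; clause 1 supplies a witness eavesdropper $e$ in $T_1(\bee)\cup T_2(\bee)$; clause 3 gives $\SINR_{ie}\ge \ell(d_{ie})/(1+c)$; and the $SSG$ edge condition is then contradicted by an attenuation comparison. The genuine gap is in your geometric step. Openness of $\bee$ puts no upper bound on how far the transmitter $x_i$ may be from the rectangle: it only has to lie outside $S_1(\bee)\cup S_2(\bee)$ with the segment meeting $\bee$. Take $x_j$ just outside a corner of the rectangle adjacent to an endpoint of $\bee$, and $x_i$ at distance $D\gg M$ along the extension of that segment (the segment still crosses $\bee$ near its endpoint). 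Then $d_{ij}=D+o(D)$ while every $e\in T_1(\bee)\cup T_2(\bee)$ satisfies $d_{ie}\ge D-O(M)$, so both of your claims ``$d_{ie}\le M/\sqrt{2}+o(M)$'' and ``$d_{ij}/d_{ie}\ge \rho>1$ uniformly'' are false: the ratio tends to $1$ as $D\to\infty$. Your extremal-placement analysis only considers $x_i$ adjacent to the squares, which is not the extremal configuration for the ratio, and the subsequent step $\ell(d_{ie})/\ell(d_{ij})\ge \ell(d_{ij}/\rho)/\ell(d_{ij})\to\infty$ collapses with it.

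What the geometry does yield (after a case analysis using that $x_j\notin S_1(\bee)\cup S_2(\bee)$ as well) is only an \emph{additive} gap $d_{ij}-d_{ie}\ge c_0 M-O(m)$ for a well-chosen witness $e$, and an additive gap gives $\ell(d_{ie})\ge(1+c)\ell(d_{ij})$ uniformly only if $\ell$ decays at least exponentially (or has bounded support); for power-law attenuation, which is allowed by the paper's standing assumption $\int x\ell(x)\,dx<\infty$, the required multiplicative separation fails in the far-$x_i$ configurations above. Note also that your sufficient condition $\ell(d/\rho)/\ell(d)\to\infty$ is strictly stronger than anything the paper assumes. To be fair, you located the real difficulty correctly: the paper's own proof is terse at exactly this point (it asserts that for $M$ large, $\ell(d_{ij})<\SINR_{ie}$, which implicitly needs $d_{ie}$ to stay bounded or $\ell$ to decay fast), but the uniform ratio bound you introduce to close that step does not hold, so your argument as written does not prove the lemma.
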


\begin{proof} Let two legitimate nodes $x_i, x_j \in\Phi$ be such that the straight line between $x_i$ and $x_j$ intersects an open edge $\bee$ of $\bM$. Then by definition of an open edge, $x_i, x_j \notin  \left(S_1(\bee)\cup S_2(\bee)\right)$. Thus, the signal power between $x_i$ and $x_j$, is $\ell(d_{ij}), d_{ij} > M$. Moreover, the SINR between $x_i$ and any eavesdropper node $e \in \left(T_1(\bee)\cup T_2(\bee)\right)$, $\SINR_{ie} \ge \frac{\ell(d_{ie})}{1+c}$, since edge $\bee$ is open and hence $I_e \le c$ for any $e \in (T_1(\bee)\cup T_2(\bee)$. Thus, choosing $M$ large enough, we can have $\ell(d_{ij}) < \SINR_{ie}$ for any $e \in \left(T_1(\bee)\cup T_2(\bee)\right)$, and hence $x_i$ and $x_j$ cannot be connected directly in $SSG$ if the straight line between them happens to cross an open edge $\bee$ of $\bM$.
\end{proof}

\begin{defn} Consider a square box $B$. Then by $\{L\rightarrow R \ \text{crossing of} \ B \ \text{by}\ G\}$, we mean that there is a connected path of graph $G$ that crosses $B$ from left to right. Similarly, top to bottom crossing is represented as $\{T\rightarrow D \ \text{crossing of} \ B \ \text{by}\ G\}$.
\end{defn}

\begin{lemma}\label{lem:subcritmain} If bond percolation happens on square lattice $\bM$ for large enough $M$ for which an edge $(x_i,x_j)\in SSG$ cannot cross an open edge $\bee$ of $\bM$, then the connected component of $SSG$ is finite.
\end{lemma}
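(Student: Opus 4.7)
The plan is to combine Lemma~\ref{lem:stopping} with the standard 2D topological observation that bond percolation on a planar lattice produces open circuits around every bounded region almost surely. By Lemma~\ref{lem:stopping}, no edge of $SSG$ can cross an open edge of $\bM$, so any open circuit of $\bM$ enclosing the origin confines the connected component $C_{x_1}$ to the bounded region inside that circuit. Since the Poisson process $\Phi$ has almost surely finitely many points in any bounded set, this immediately yields $|C_{x_1}| < \infty$.

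Thus the core of the proof reduces to showing that, under the hypothesis of the lemma, there is almost surely an open circuit of $\bM$ enclosing the origin. First I would note that the edge-indicator field $\{\tilde D(\bee)\}_{\bee \in \bM}$ is stationary under the translations of $\bM$ (inherited from the stationarity of $\Phi$ and $\Phi_E$) and ergodic, so that ``bond percolation happens'' upgrades from positive probability to probability one. I would then invoke the classical 2D picture (supercritical percolation $\Rightarrow$ almost sure existence of an open circuit around every fixed point, via RSW-style box crossings combined through FKG and Borel--Cantelli applied to a sequence of disjoint annuli) applied at the origin.

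Once such a circuit $\mathcal{C}$ exists, the argument closes cleanly: by the Jordan curve theorem $\mathcal{C}$ bounds a compact region $K \subset \bbR^2$ with $x_1 \in K$, and any $SSG$-path from $x_1$ that exited $K$ would contain an edge whose straight-line realisation crosses $\mathcal{C}$, hence crosses some open edge of $\bM$, contradicting Lemma~\ref{lem:stopping}. Therefore $C_{x_1} \subseteq \Phi \cap K$, which is almost surely finite.

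The main obstacle is the circuit-existence step. The indicators $\tilde D(\bee) = \tilde A(\bee)\tilde B(\bee)\tilde C(\bee)$ are \emph{not} independent across edges: $\tilde A$ and $\tilde B$ give only local dependence (adjacent edges share the squares $T_i(\bee), S_i(\bee)$), but $\tilde C(\bee)$ contains the interference $I_e = \sum_{x_k \in \Phi}\ell(x_k, e)$ which depends on all of $\Phi$. I would handle this by approximating $\tilde C$ by a version truncated to legitimate nodes within a large distance $R$ of $T_1(\bee) \cup T_2(\bee)$, controlling the truncation error uniformly in $\bee$ via $\int x\ell(x)\,dx < \infty$, and then appealing to a Liggett--Schonmann--Stacey type stochastic-domination result to dominate the resulting finitely dependent edge process from below by an independent Bernoulli bond percolation whose parameter is supercritical whenever the hypothesis of the lemma holds; the classical circuit-around-origin theorem then applies directly.
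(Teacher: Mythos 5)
Your second half is fine, and it is a genuinely different (and in fact cleaner) route than the paper's: given an open circuit of $\bM$ surrounding the origin, Lemma~\ref{lem:stopping} forbids any $SSG$ edge from crossing any edge of that circuit, so $C_{x_1}$ is trapped inside the compact region it bounds and is a.s.\ finite because $\Phi$ is locally finite. The paper never builds circuits; it instead argues by contradiction with box crossings: if $\bM$ percolates then left--right and top--bottom crossings of a large box $B_N$ by open edges of $\bM$ occur with probability tending to one, while percolation of $SSG$ would force the analogous crossings of $B_N$ by $SSG$; by Lemma~\ref{lem:stopping} a planar $SSG$ crossing must intersect, hence cross, an open $\bM$ crossing, so the two limit statements are incompatible.

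The genuine gap is in your circuit-existence step. From the lemma's hypothesis you only have the qualitative fact that bond percolation happens on $\bM$, and the field $\{\tilde D(\bee)\}$ is a dependent (indeed, through $\tilde C$, unbounded-range) edge process built from two Poisson processes; there is no off-the-shelf RSW theory for it, so ``supercritical $\Rightarrow$ open circuits around every point'' cannot simply be invoked. Your fallback, truncating $\tilde C$ and applying a Liggett--Schonmann--Stacey domination, does not close the gap either: LSS gives domination by a \emph{supercritical} Bernoulli process only when the marginal closed-edge probability is small enough, which is a quantitative condition on $\lambda,\lambda_E,m,M,c$ that the hypothesis ``bond percolation happens on $\bM$'' does not supply; those quantitative bounds (the Peierls estimates $P(\tilde D(\bee_1)=0,\ldots,\tilde D(\bee_n)=0)\le r_s^n$ with $r_s$ small) are only established afterwards, in Theorem~\ref{thm:subcrit}. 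So as a proof of the lemma as stated, the key step is unsupported. It becomes correct if you either (i) restate the lemma so that its hypothesis includes the Peierls-type decay of closed-edge probabilities, from which open circuits in large annuli around the origin follow by planar duality and Borel--Cantelli, with no RSW or LSS needed, or (ii) replace the circuit construction by the paper's crossing-incompatibility argument, which works directly from the percolation hypothesis (modulo the same ergodicity upgrade you already mention).
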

\begin{proof} Consider a square box $B_N$ of side $N$ centered at the origin. Let $M$ be large enough such that an edge $(x_i,x_j)\in SSG$ cannot cross an open edge $\bee$ of $\bM$.
If bond percolation happens on square lattice $\bM$, then
\begin{equation}\label{eq:perceav}\lim_{N\rightarrow \infty} P\left(\exists \ \text{a}\ L\rightarrow R \ \text{and} \ T\rightarrow D \ \text{crossing of} \ B_N \ \text{by open edges of}\ \bM\right) = 1.
\end{equation}
The proof is by contradiction. Let there be an infinite connected component in the $SSG$ with probability $1$. Then, necessarily
\begin{equation}\label{eq:percleg}\lim_{N\rightarrow \infty} P\left(\exists\ \text{a}\ L\rightarrow R\ \text{and} \ T\rightarrow D \ \text{crossing of} \ B_N \ \text{by}\ SSG\right) = 1.
\end{equation}  Since $M$ is such that an edge $(x_i,x_j)\in SSG$ cannot cross an open edge $\bee$ of $\bM$, (\ref{eq:perceav}) and
(\ref{eq:percleg}) cannot hold simultaneously.
\end{proof}

Next, we show that for small enough density of legitimate nodes $\lambda$, bond percolation can happen on a square lattice $\bM$ for large enough $M$ for which an edge $(x_i,x_j)\in SSG$ cannot cross an open edge $\bee$ of $\bM$.

\begin{thm}\label{thm:subcrit} For large enough $M$ that ensures that $(x_i,x_j)\in SSG$ cannot cross an open edge $\bee$ of $\bM$, bond percolation on $\bM$ happens for small enough density of legitimate nodes $\lambda$.
\end{thm}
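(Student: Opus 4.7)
The plan is to adapt the contour-counting argument used in the proof of Theorem~\ref{thm:sup} to the dual situation: by planar duality for $\bbZ^2$, bond percolation on the open edges of $\bM$ holds with positive probability provided that the probability of a closed dual circuit in $\bM'$ surrounding the origin is strictly less than one. As in Lemma~\ref{lem:suffcond}, the number of circuits of length $n$ around the origin is at most $4n3^{n-2}$, so it suffices to produce a bound of the form
\[
P\!\left(\bigcap_{i=1}^{n}\{\tilde D(\bee_i)=0\}\right)\le q^{n},\qquad n\ge 4,
\]
with $q>0$ arbitrarily small, so that the Peierls series $\sum_n 4n3^{n-2}q^n$ is strictly less than one. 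Combined with Lemmas~\ref{lem:stopping} and \ref{lem:subcritmain}, this will yield Theorem~\ref{thm:inf}.

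Writing $\tilde D=\tilde A\,\tilde B\,\tilde C$, one factors the joint probability using the same Dousse-style inequality proved in Lemma~\ref{lem:d}:
\[
P\!\left(\bigcap_{i=1}^{n}\{\tilde D(\bee_i)=0\}\right)\le\left(\sqrt{\tilde p_1}+\tilde p_2^{1/4}+\tilde p_3^{1/4}\right)^{n},
\]
where $\tilde p_k$ is a per-edge $n$-th-root bound on the joint failure of the $k$-th indicator along a suitably thinned sub-circuit of mutually far-apart edges. The two local factors are easy. Since $\tilde A(\bee)$ depends only on $\Phi_E\cap(T_1(\bee)\cup T_2(\bee))$, edges with disjoint $T$-neighbourhoods are independent, and extracting such a sub-circuit yields $\tilde p_1\le[\,1-(1-e^{-\lambda_E m^2})^2\,]^{1/c_1}$, which tends to $0$ as $m\to\infty$. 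Likewise $\tilde B(\bee)$ depends only on $\Phi\cap(S_1(\bee)\cup S_2(\bee))$, giving $\tilde p_2\le(1-e^{-2\lambda M^2})^{1/c_2}$, which tends to $0$ as $\lambda\to 0$ for fixed $M$.

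The main obstacle is $\tilde p_3$, because the interference $I_e=\sum_{x_k\in\Phi}\ell(x_k,e)$ at an eavesdropper depends on $\Phi$ globally and couples the events $\{\tilde C(\bee_i)=0\}$ across the entire circuit. I handle this in the spirit of \cite[Proposition~2]{Dousse2006}: on the event $\tilde C(\bee_i)=0$ pick a witness eavesdropper $e_i^{\star}\in\Phi_E\cap(T_1(\bee_i)\cup T_2(\bee_i))$ with $I_{e_i^\star}>c$, so that on the intersection $\sum_{i=1}^n I_{e_i^\star}>nc$. A Chernoff bound applied to this shot-noise sum, together with the Laplace functional of the PPP $\Phi$ (and a union bound over the possible witness configurations, controlled by the local moments of $\Phi_E$ on bounded sets), produces a bound of the form $\tilde p_3\le\exp\!\bigl(K_1\lambda\!\int\!\ell(x)\,dx-K_2 c\bigr)$ for positive constants $K_1,K_2$. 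For any fixed $c$ this can be made arbitrarily small by taking $\lambda$ small.

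The parameters are then chosen in the following order: given $\lambda_E$, fix $c>0$ and $m$ large so that $\tilde p_1$ is arbitrarily small; then let $M$ be the value provided by Lemma~\ref{lem:stopping} for this $(m,c)$; finally take $\lambda$ small enough that $\tilde p_2$ and $\tilde p_3$ are each as small as desired. Under these choices, $q=\sqrt{\tilde p_1}+\tilde p_2^{1/4}+\tilde p_3^{1/4}$ is arbitrarily small, the Peierls sum is strictly less than one, and the proof of Theorem~\ref{thm:subcrit} (hence of Theorem~\ref{thm:inf}) closes.
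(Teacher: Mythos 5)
Your proposal is correct and follows essentially the same route as the paper: a Peierls contour bound on closed dual circuits, the factorization $\tilde D=\tilde A\tilde B\tilde C$ handled via the analogues of Lemmas~\ref{lem:a}, \ref{lem:b} and \ref{lem:d} (the interference term $\tilde C$ being controlled exactly by the Dousse-style shot-noise estimate you describe), and the same parameter ordering of fixing $c$ and $m$, taking $M$ from Lemma~\ref{lem:stopping}, and finally sending $\lambda$ small. The only differences are cosmetic: you spell out the shot-noise argument that the paper simply imports from Lemma~\ref{lem:b}, and you write the eavesdropper void probability with $m^2$ where the paper has a typographical $m$.
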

\begin{proof} Similar to the proof in the super-critical regime, we need to show that the probability of having a closed circuit surrounding the origin in $\bM$ is less than $1$. Towards that end, consider the probability of a closed circuit of length $n$, $P({\tilde D}(\bee_1)=0, {\tilde D}(\bee_2)=0, \ldots, {\tilde D}(\bee_n)=0)$, where ${\tilde D}(\bee_1) = {\tilde A}(\bee_1){\tilde B}(\bee_1){\tilde C}(\bee_1)$.  Similar to  Lemma \ref{lem:a}, $P({\tilde A}(\bee_1)=0, {\tilde A}(\bee_2)=0, \ldots, {\tilde A}(\bee_n)=0)\le   r_1^{n}$, where $r_1 \bydef r_{A}^{1/4}$ and $r_{A} = 1-(1-e^{-\lambda_E m })^2$ is the probability that there is no eavesdropper in either $T_1(\bee)$ or $T_2(\bee)$.
Similarly,  following Lemma \ref{lem:a}, $P({\tilde B}(\bee_1)=0, {\tilde B}(\bee_2)=0, \ldots, {\tilde B}(\bee_n)=0) \le   r_2^{n}$, where $r_2 \bydef r_{B}^{1/4}$ and $r_{B} = 1-e^{-2\lambda M^2}$ is the probability that there is at least one legitimate node of $\Phi$ in $S_1(\bee)$ or $S_2(\bee)$,
$P({\tilde C}(\bee_1)=0, {\tilde C}(\bee_2)=0, \ldots, {\tilde C}(\bee_n)=0) \le r_3^n$ where $r_3  = \bydef e^{\left(\frac{2\lambda}{K}\int \ell(x) dx - \frac{c}{K} \right)}$ following Lemma \ref{lem:b}, and finally $P({\tilde D}(\bee_1)=0, {\tilde D}(\bee_2)=0, \ldots, {\tilde D}(\bee_n)=0) \le \left(\sqrt{r_1} + r_2^{1/4} + r_3^{1/4}\right)^n$ following Lemma \ref{lem:d}. Let $r_s \bydef \sqrt{r_1} + r_2^{1/4} + r_3^{1/4}$.

Using Peierl's argument, bond percolation happens in $\bM$ if $r_s < \epsilon$ for sufficiently small $\epsilon >0$. Let us fix such an $\epsilon> 0$. Then, by choosing $m$ large enough, we can have $\sqrt{r_1} < \frac{\epsilon}{3}$. Moreover, for fixed $c$, let $M$ be large enough such that for any pair of legitimate nodes $x_i, x_j \notin  \left(S_1(\bee)\cup S_2(\bee)\right)$ for which the straight line between them intersects an open edge $\bee$ of $\bM$, $\ell(d_{ij}) < \SINR_{ie}$ for any $e \in \left(T_1(\bee)\cup T_2(\bee)\right)$. Now, given $c,m,$ and $M$, we can  choose $\lambda$ small enough so that $ r_2^{1/4} < \frac{\epsilon}{3}$ and
$ r_2^{1/4} < \frac{\epsilon}{3}$. Thus, we have that $r_s < \epsilon$ as required for an appropriate choice of $m,M$ and $\lambda$.
\end{proof}

Thus, from Theorem \ref{thm:subcrit} and Lemma \ref{lem:subcritmain}, we have that for small enough legitimate node density $\lambda$, $SSG$ with $\gamma=0$ does not percolate.

\bibliographystyle{apt}
\bibliography{../../../Research}

\end{document}